\documentclass[reqno,11pt]{amsart}
\usepackage{setspace,tikz,xcolor,mathrsfs,listings,multicol,amssymb,amsfonts,amsthm}
\usepackage{pgfplots}
\pgfplotsset{compat=1.15}
\usepackage{rotating}
\usepackage{tkz-base}
\usepackage{tkz-euclide}
\usepackage{tikz-cd}
\usepackage{comment}
\usepackage[margin=1in,includefoot,footskip=30pt]{geometry}
\usepackage{enumerate}
\usepackage{booktabs}
\usepackage[all,cmtip]{xy}
\usetikzlibrary{arrows,matrix,math}

\tikzset{tab/.style={matrix of math nodes,column sep=-.35, row sep=-.35,text height=7pt,text width=7pt,align=center,inner sep=2,font=\footnotesize}}

\usepackage{hyperref}
\hypersetup{hidelinks,
 pdftitle={Christoffel transformations of symmetric d-connections and gap probability recurrences},
 pdfauthor={Anton Selemenchuk}}

\IfFileExists{mathc10.tfm}{
\DeclareFontFamily{U}{mathc}{}
\DeclareFontShape{U}{mathc}{m}{it}%
{<->s*[1.03] mathc10}{}
\DeclareMathAlphabet{\mathcal}{U}{mathc}{m}{it}
}{}

\definecolor{darkred}{rgb}{0.7,0,0} 

\definecolor{UQgold}{RGB}{196, 158, 54} 
\definecolor{UQpurple}{RGB}{73, 7, 94} 

\usepackage{listings}
\lstdefinelanguage{Sage}[]{Python}
{morekeywords={False,sage,True},sensitive=true}
\definecolor{dblackcolor}{rgb}{0.0,0.0,0.0}
\definecolor{dbluecolor}{rgb}{0.01,0.02,0.7}
\definecolor{dgreencolor}{rgb}{0.2,0.4,0.0}
\definecolor{dgraycolor}{rgb}{0.30,0.3,0.30}

\theoremstyle{plain}
\newtheorem{thm}{Theorem}[section]
\newtheorem{lemma}[thm]{Lemma}

\newtheorem{prop}[thm]{Proposition}
\newtheorem{cor}[thm]{Corollary}
\theoremstyle{definition}
\newtheorem{dfn}[thm]{Definition}

\newtheorem{remark}[thm]{Remark}

\numberwithin{equation}{section}
\begin{document}
\title[Christoffel transformations and gap recurrences]
{Christoffel transformations of symmetric $d$-connections
and gap probability recurrences}
 \author[A.~Selemenchuk]{Anton Selemenchuk}
   \address[A.~Selemenchuk] {Faculty of Mathematics and Computer Science,
    St.\ Petersburg State University, University Embankment, 7/9, St.\
    Petersburg, Russia, 199034}
  \email{selemenchyk@icloud.com}
\subjclass[2020]{Primary 14D20;
Secondary 14H60, 39A36, 33C45, 60G55}

\begin{abstract}
We study the Christoffel transformation $w(x)\mapsto x^2w(x)$
of an even discrete weight. For the associated orthogonal polynomial
ensembles on squared nodes $y=x^2$, ratios of moment determinants
relate the largest-particle distributions and transfer exact gap
probability recurrences between the ensembles. For positive weights
satisfying the stated moment and support assumptions, this comparison
requires no Pearson condition.
The transformation is realized by elementary bundle modifications
at zero and infinity on $\mathbb P^1$. 

When the original weight has
Pearson ratio $d(-z-1)/d(z)$ with $\mathrm{deg}\;  d=1,2$, these modifications
conjugate cutoff dynamics on common stable loci of symmetric
$d$-connections under the stated geometric hypotheses.
The transformed recurrence coefficients are expressed explicitly
as rational functions of spectral coordinates.
On the unit lattice, a local recurrence formula applies to rational
symmetric Pearson ratios of arbitrary degree, including outside
these charts, whenever its local hypotheses hold.\end{abstract}
\maketitle
\tableofcontents
\section{Introduction}
The distribution function of the largest particle in a discrete
orthogonal polynomial ensemble is a gap probability: the probability
that no particle lies above a prescribed cutoff.
Such ensembles arise in models of random tilings and random Young
diagrams; see \cite{Okounkov2001,BKMM2007}.
Their gap probabilities connect probability theory, orthogonal
polynomials and integrable systems.
Multiplying the weight by a polynomial produces a related ensemble.
We study how exact recurrences for gap probabilities transform under
this change, keeping track of the normalization factors and the
associated geometric dynamics.
We focus on the Christoffel transformation
$w(x)\mapsto\widehat w(x)=x^2w(x)$ of an even discrete weight. Multiplying the weight by a polynomial produces related ensembles.
The problem is to transfer exact gap probability recurrences under this
change while controlling their normalization and geometric dynamics.
We address this problem for the Christoffel transformation
$w(x)\mapsto\widehat w(x)=x^2w(x)$ of an even discrete weight.

One motivation comes from symplectic skew Howe duality
\cite{Howe1989,NazarovPostnovaScrimshaw2024}. For the dual pair
$(\mathrm{Sp}_{2n},\mathrm{Sp}_{2\ell})$, the dimensions of the irreducible
summands define a probability measure on diagrams
$\lambda\subseteq\ell^n$. In the coordinates
$a_i=\lambda_i+n-i+1$ and $N=n+\ell$, this measure takes the form
\[
 \mathbb P(a_1,\ldots,a_n)\ \propto\
 \prod_{i<j}(a_i^2-a_j^2)^2
 \prod_{i=1}^n a_i^2\binom{2N}{N+a_i},
 \qquad N\ge a_1>\cdots>a_n\ge1;
\]
see \cite[Section~2]{NNS2026}. Thus the factor $a_i^2$ is precisely a
Christoffel deformation of the symmetric Krawtchouk weight. Moreover,
$\mathbb P(\max_i a_i\le s)=\mathbb P(\lambda_1\le s-n)$, so a gap
probability gives the distribution of the first row of the diagram.
The ensemble is determinantal: the symplectic group enters through
representation theory, whereas the squared Vandermonde corresponds
to an orthogonal polynomial ensemble with $\beta=2$.

The comparison of projection kernels in \cite{NNS2026} transfers
local asymptotic results between ensembles related by a Christoffel
transformation. Here we study the exact dependence of gap probabilities
on the cutoff for a fixed number of particles. We seek a geometric
description that also applies to deformations of classical weights.
Indeed, if $w(j+1)/w(j)=\varrho(j)$, then
\[
 \frac{\widehat w(j+1)}{\widehat w(j)}
 =\frac{(j+1)^2}{j^2}\varrho(j),\qquad j,j+1\ne0.
\]
Even for the Krawtchouk ratio $\varrho(z)=(N-z)/(N+z+1)$, the
transformed ratio has numerator and denominator of degree three.
Bundle modifications transfer the original spectral data without
imposing the original degree bound on the new weight. The construction
therefore applies to deformed, potentially nonclassical ensembles.

The continuous theory provides the analytic background. Fokas, Its and
Kitaev \cite{FokasItsKitaev1992} characterized orthogonal polynomials by
a matrix Riemann--Hilbert problem. The nonlinear steepest descent method
of Deift and Zhou \cite{DeiftZhou1993} provides a systematic approach to
the asymptotic analysis of oscillatory Riemann--Hilbert problems; see Deift's monograph
\cite{Deift1999} for applications to orthogonal polynomials and random matrices.

Tracy and Widom \cite{TracyWidom1994} derived differential equations for
the Fredholm determinants representing gap probabilities. In particular,
the largest-eigenvalue distributions for the Hermite and Laguerre unitary
ensembles of finite matrix size admit representations through
Painlev\'e~IV and~V transcendents, respectively. In the soft-edge scaling limit,
the Airy-kernel Fredholm determinant gives
the Tracy--Widom distribution $F_2$, which is expressed through the solution of Painlev\'e~II \cite{TracyWidomAiry1994}.

In the discrete setting, Borodin and Boyarchenko \cite{Borodin_2003}
derived recurrences for first-particle distributions under a rational
Pearson assumption. Classical specializations give difference Painlev\'e
equations. Their method already extends beyond
classical weights; general discrete weights are also central to the
asymptotic theory in \cite{BKMM2007}.
Arinkin and Borodin \cite{arinkin2007,ArinkinBorodinTau2007}
interpreted discrete isomonodromy through modifications of rational
$d$-connections and related gap probabilities to their $\tau$-functions.
The $q$-difference and symmetric settings were developed further in
\cite{knizel2015,OR2016,DzhamayKnizel2019}.
Polynomial weight changes and their normalization factors are related
to averages of characteristic polynomials
\cite{BaikDeiftStrahov2003}; Darboux transformations of Riemann--Hilbert
problems also compare gap probabilities
in \cite{BertolaCafasso2015}.

We describe the Christoffel transformation, cutoff evolution and
normalization in terms of symmetric $d$-connections. The substitution
$y=x^2$ identifies the nodes $x,-x$ and turns the unit shift into the
correspondence $z^2\mapsto(z+1)^2$, reversed by $z\mapsto-z-1$.
The weight transformation is represented by elementary modifications
at $0$ and $\infty$. On the common stable loci, these modifications
conjugate the point-addition maps and determine the change in the
normalization factor entering the gap probability recurrence.

The matrix-divisor construction \cite{Sheinman2017,Tyur65} defines the
underlying bundle without assuming that the normalized discrete
Riemann--Hilbert problem is solvable. For finite support and complex
weights, the moment matrix $H_w$ represents the connecting homomorphism
of a twisted bundle extension (Lemma~\ref{lem:rank-two-moment-cohomology}).
The normalized problem is therefore solvable exactly when $\det H_w\ne0$,
and Corollary~\ref{cor:rank-two-corank-splitting} gives the splitting
\[
 \mathcal E\simeq\mathcal O(r)\oplus\mathcal O(-r),
 \qquad r=\operatorname{corank}H_w.
\]
The elementary modifications for $w$ and $x^2w$ compare normalizations
through a signed ratio of moment determinants. For positive weights,
the determinant ratio equals the conditional expectation of the product
of squared particle coordinates (Lemma~\ref{lem:christoffel-cohomology}
and Section~\ref{sec:christoffel-gap}).

For positive even weights satisfying the moment and support assumptions,
we compare gap probabilities through moment determinants without imposing
a Pearson condition. For weights on the unit lattice, the bundle
construction gives a local recurrence formula for rational symmetric
Pearson ratios of arbitrary degree. For source ratios
$\varrho(z)=d(-z-1)/d(z)$ with $\deg d=1,2$, we express the Christoffel
transformation and the resulting recurrence in rational spectral
coordinates under the stated geometric hypotheses. The local formula
remains applicable outside these charts under its own hypotheses.

Section~\ref{sec:discrete-rhp} establishes the normalization and splitting
criteria. Section~\ref{sec:matrix-divisor-dynamics} describes point addition
and its Lax pair. Section~\ref{sec:quadratic-christoffel} constructs the
Christoffel modifications and the spectral charts, and
Section~\ref{sec:christoffel-gap} derives the probability comparison and
the recurrence coefficients.

\section{Discrete Riemann--Hilbert problem}
\label{sec:discrete-rhp}
\begin{dfn}[{\cite[Section~2.1]{Borodin_2003}}]
Let $\mathfrak X\subset\mathbb C$ be locally finite and
$\omega:\mathfrak X\to\operatorname{Mat}(2,\mathbb C)$.
A solution of the \emph{discrete Riemann--Hilbert problem} (DRHP)
$(\mathfrak X,\omega)$ is a matrix-valued function $m$, holomorphic off
$\mathfrak X$, with at most simple poles and satisfying
\begin{equation}\label{DRHP}
 \operatorname{Res}_{z=x}m(z)=\lim_{z\to x}m(z)\omega(x),
 \qquad x\in\mathfrak X.
\end{equation}
\end{dfn}
Equivalently, $m(z)(I-\omega(x)/(z-x))$ is holomorphic at $x$.
For $\omega(x)^2=0$, this gives the local factorization
\begin{equation}\label{cond}
 m(z)=H_x(z)\left(I+\frac{\omega(x)}{z-x}\right),
 \qquad H_x\text{ holomorphic at }x.
\end{equation}
In rank two, a pole implies $\omega(x)^2=0$: writing
$m=A_{-1}/(z-x)+A_0+O(z-x)$, the residue condition gives
$A_{-1}\omega=0$ and $A_{-1}=A_0\omega$. Hence $\det\omega=0$, and
the Cayley--Hamilton theorem \cite[Theorem~2.4.3.2]{HornJohnson2013}
gives $0=A_0\omega^2=(\operatorname{tr}\omega)A_{-1}$,
so $\omega^2=0$.

\begin{cor}[{\cite[Lemma~2.1]{Borodin_2003}}]\label{det_cor}
For a DRHP solution $m$, if every $\omega(x)$ is nilpotent,
then $\det m$ is entire.
If also $\det m\to1$ at infinity, then $\det m=1$.
\end{cor}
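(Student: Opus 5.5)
The plan is to use the local factorization \eqref{cond} at each node $x\in\mathfrak X$. Since $\omega(x)$ is nilpotent, $\omega(x)^2=0$, and therefore
\[
 \det\Bigl(I+\frac{\omega(x)}{z-x}\Bigr)=1+\frac{\operatorname{tr}\omega(x)}{z-x}+\frac{\det\omega(x)}{(z-x)^2}=1,
\]
because the trace and determinant of a nilpotent $2\times2$ matrix both vanish. Taking determinants in \eqref{cond} gives $\det m(z)=\det H_x(z)$ near $x$, and the right-hand side is holomorphic at $x$.

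Away from $\mathfrak X$, $m$ is holomorphic by definition, so $\det m$ is holomorphic there as well. Because $\mathfrak X$ is locally finite, the removable-singularity argument applies at each of its points separately. Hence $\det m$ extends to an entire function.

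One point needs care: \eqref{cond} was derived from the residue condition under the hypothesis $\omega(x)^2=0$, and nilpotency supplies exactly this. I should also check that the factorization holds even when $m$ has no actual pole at $x$. In that case $H_x=m(I-\omega(x)/(z-x))$. This is holomorphic at $x$, because the residue condition \eqref{DRHP} with $\operatorname{Res}=0$ forces $m(x)\omega(x)=0$. So \eqref{cond} holds at every node without exception.

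For the second claim, suppose $\det m\to1$ at infinity. Then $\det m$ is a bounded entire function: it is bounded outside a large disc by the limit and inside the disc by continuity. By Liouville's theorem it is constant, and the limit forces the constant to be $1$.

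The only real obstacle is verifying \eqref{cond} uniformly at every node, including the no-pole case just handled. Everything else is immediate.
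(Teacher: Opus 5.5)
Your argument is correct and is the one the paper intends: the corollary follows from the local factorization \eqref{cond} by taking determinants, using $\det\bigl(I+\omega(x)/(z-x)\bigr)=1$ for nilpotent $\omega(x)$ so that $\det m=\det H_x$ near each node, and then applying Liouville. Your separate no-pole check is harmless but already covered by the equivalence of \eqref{DRHP} with holomorphy of $m(z)(I-\omega(x)/(z-x))$ at $x$. If, for infinite $\mathfrak X$, the limit at infinity is only meant along the exhausting contours of Theorem~\ref{thm2}, replace your bound ``outside a large disc'' by the maximum modulus principle on those nested domains, which gives $\det m\equiv1$ directly.
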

Henceforth the residue matrix and the associated bilinear form are
\[
 \omega(x)=\begin{pmatrix}0&w(x)\\0&0\end{pmatrix},\qquad
 (f,g)_w=\sum_{x\in\mathfrak X}f(x)g(x)w(x),
 \qquad w:\mathfrak X\to\mathbb C.
\]

\begin{thm}[{\cite[Theorem~2.4 and Remark~2.5]{Borodin_2003}}]\label{thm2}
Let $\mathfrak X\subset\mathbb C$ be nonempty and locally finite. Set $M=|\mathfrak X|-1$
(with $M=\infty$ for infinite support) and assume
\[
 \sum_{x\in\mathfrak X}|w(x)|\,|x|^j<\infty
 \qquad (j=0,1,2,\ldots).
\]
Suppose $(\cdot,\cdot)_w$ is nondegenerate on
$\mathbb C[z]_{\le n}$ for every $0\le n\le M$.
Let $P_n$ be the monic orthogonal polynomials and set
\[
 h_n=(P_n,P_n)_w\ne0,\qquad
 (\mathcal C_w p)(z)=\sum_{x\in\mathfrak X}\frac{p(x)w(x)}{z-x}.
\]
For infinite $\mathfrak X$, assume that nested bounded Jordan domains
exhaust $\mathbb C$ with boundaries uniformly separated from
$\mathfrak X$, and interpret asymptotics uniformly on these boundaries.
For $1\le k\le M$, the unique DRHP solution normalized by
\begin{equation}\label{asymp_DRHP}
 m_{\mathfrak X}(z)\operatorname{diag}(z^{-k},z^k)
 =I+O(z^{-1}),\qquad z\to\infty,
\end{equation}
is
\[
 m_{\mathfrak X}(z)=
 \begin{pmatrix}
 P_k(z)&(\mathcal C_wP_k)(z)\\
 h_{k-1}^{-1}P_{k-1}(z)&h_{k-1}^{-1}(\mathcal C_wP_{k-1})(z)
 \end{pmatrix}.
\]
\end{thm}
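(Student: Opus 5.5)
We verify directly that the displayed matrix solves the DRHP and has the stated normalization, and then deduce uniqueness from Corollary~\ref{det_cor}.

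\emph{Existence.} The first column consists of polynomials, so it is entire. For the second column, $(\mathcal C_wP)(z)=\sum_x P(x)w(x)/(z-x)$ converges off $\mathfrak X$ by the moment hypothesis and is meromorphic with at most simple poles at the nodes. Its residue at $x$ is $P(x)w(x)$. Since
\[
m\,\omega(x)=\begin{pmatrix}0&m_{11}(x)w(x)\\0&m_{21}(x)w(x)\end{pmatrix},
\]
the residue condition \eqref{DRHP} holds entrywise.

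\emph{Normalization.} We expand the Cauchy transform at infinity:
\[
(\mathcal C_wP)(z)=\sum_{j\ge0} z^{-j-1}(P,x^j)_w .
\]
For finite $\mathfrak X$ this is a convergent Laurent series. For infinite $\mathfrak X$ it is an asymptotic expansion; to justify it uniformly on the prescribed contours we use $1/(z-x)=\sum_{j<J}x^j z^{-j-1}+x^J z^{-J}/(z-x)$, the uniform separation of the contours from $\mathfrak X$, and the moment bounds.

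By orthogonality, $\mathcal C_wP_k=O(z^{-k-1})$, so $(\mathcal C_wP_k)z^k=O(z^{-1})$. Similarly $h_{k-1}^{-1}\mathcal C_wP_{k-1}=h_{k-1}^{-1}\bigl(h_{k-1}z^{-k}+O(z^{-k-1})\bigr)$, which gives $1+O(z^{-1})$ after multiplying by $z^k$. In the first column, $P_kz^{-k}=1+O(z^{-1})$ and $h_{k-1}^{-1}P_{k-1}z^{-k}=O(z^{-1})$. For $k\le M$ these polynomials exist and $h_{k-1}\ne0$ by the nondegeneracy hypothesis.

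\emph{Uniqueness.} Every $\omega(x)$ is nilpotent, so Corollary~\ref{det_cor} applies. Since $\det\operatorname{diag}(z^{-k},z^k)=1$, the normalization forces $\det m\to1$, hence $\det m\equiv1$. Given two solutions $m,\tilde m$, form $R=\tilde m\,m^{-1}$, where $m^{-1}$ is the adjugate of $m$. Locally, by \eqref{cond},
\[
R=\tilde H_x(I+\omega/(z-x))(I-\omega/(z-x))H_x^{-1}=\tilde H_xH_x^{-1},
\]
using $\omega^2=0$. Thus $R$ is holomorphic at every node and hence entire.

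At infinity we have $R=\bigl(\tilde m D\bigr)\bigl(mD\bigr)^{-1}$ with $D=\operatorname{diag}(z^{-k},z^k)$, so $R=I+O(z^{-1})$. For finite $\mathfrak X$, Liouville's theorem gives $R\equiv I$. For infinite $\mathfrak X$, we apply the maximum principle on the exhausting Jordan domains: $R-I$ is holomorphic inside and $O(1/|z|)$ on boundaries escaping to infinity, hence $R\equiv I$.

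\emph{Main obstacle.} The delicate point is the infinite-support case. There the Cauchy transforms are only asymptotically expanded, and all $O(z^{-1})$ statements, both in existence and in uniqueness, must be controlled uniformly on the contour boundaries. A further subtlety is that $R$ is holomorphic but only bounded on the boundaries, not known to be bounded globally, so the argument must proceed through the maximum modulus principle on the exhaustion rather than through Liouville's theorem.
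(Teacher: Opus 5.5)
Your proof is correct and is essentially the standard argument behind the citation; the paper gives no proof of its own, deferring to Borodin's Theorem~2.4 and Remark~2.5. There, residues and asymptotics are checked directly from orthogonality, and uniqueness follows from $\det m\equiv1$ together with Liouville's theorem, or the maximum principle on the exhausting domains, applied to $\tilde m\,m^{-1}$, exactly as you do.
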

We encode the residue conditions by matrix divisors, following
\cite{Sheinman2017,Tyur65}. Let $\mathcal L$ be a rank-$r$ vector bundle on a
compact connected Riemann surface $S$. Write $\mathcal O(\mathcal L)$
for its sheaf of holomorphic sections, $\mathcal L_x$ for its fibre,
and $\mathcal O_x=\mathcal O_{S,x}$.
The twist $\mathcal L(D)=\mathcal L\otimes\mathcal O_S(D)$ allows
poles bounded by an effective divisor $D$.

\begin{dfn}\label{D:sect}
In fixed local holomorphic frames of $\mathcal L$, a \emph{matrix divisor}
is a collection of invertible meromorphic matrix germs $\Psi_x$, regular
and invertible outside a finite set $\mathfrak X$, modulo
$\Psi_x\mapsto C_x\Psi_x$ with $C_x$ regular and invertible.
Its sheaf of sections $\Gamma_{\mathbb C^r}(\Psi)$ consists of the
meromorphic sections $f$ of $\mathcal L$ satisfying
\[
 \Psi_x f_x\in\mathcal O(\mathcal L)_x\qquad(\forall x\in U)
\]
on each open set $U\subset S$.
\end{dfn}
This condition is independent of the representative.
Applying $\Psi_x^{-1}$ to a local frame gives a frame of the new
sheaf. Thus it defines a rank-$r$ bundle $\mathcal E$, equal to
$\mathcal L$ outside $\mathfrak X$.

\begin{lemma}
If $\det\Psi_x=1$ for every $x\in\mathfrak X$, then
\[
 \det\mathcal E\simeq\det\mathcal L,\qquad
 \chi(\mathcal E)=\deg\mathcal L+r(1-g),
\]
where $g$ is the genus of $S$. On $\mathbb P^1$, if $\deg\mathcal L=0$,
this gives $h^0(\mathcal E)=r+h^1(\mathcal E)$.
\end{lemma}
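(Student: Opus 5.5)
The plan is to compare the determinant line bundles of $\mathcal E$ and $\mathcal L$ locally at each point of $\mathfrak X$, and then to get the Euler characteristic from Riemann--Roch.

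Fix a local holomorphic frame $e=(e_1,\dots,e_r)$ of $\mathcal L$ near a point $x\in\mathfrak X$. By the remark after Definition~\ref{D:sect}, the columns of $e\,\Psi_x^{-1}$ form a local holomorphic frame of $\mathcal E$ near $x$. Away from $\mathfrak X$ we have $\mathcal E=\mathcal L$, so the frame is unchanged there.

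Next, take the top exterior power. The local generator of $\det\mathcal E$ near $x$ is
\[
 e_1\wedge\cdots\wedge e_r\cdot\det\Psi_x^{-1}=e_1\wedge\cdots\wedge e_r ,
\]
because $\det\Psi_x=1$. This is exactly the local generator of $\det\mathcal L$. Hence the identity map of $\det\mathcal L$ on $S\setminus\mathfrak X$ extends holomorphically and invertibly across each point of $\mathfrak X$, which gives $\det\mathcal E\simeq\det\mathcal L$.

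The only point needing care is well-definedness. The representative of the matrix divisor may be replaced by $C_x\Psi_x$, and the frame $e$ may be changed. Neither affects the conclusion:
\begin{itemize}
\item replacing $\Psi_x$ by $C_x\Psi_x$ changes the determinant by the unit $\det C_x$, which is holomorphic and nonvanishing, so the extension is still holomorphic and invertible;
\item the hypothesis $\det\Psi_x=1$ should be read as holding for the chosen representative (or up to a unit), and in either reading the extension argument is unaffected.
\end{itemize}
More invariantly, the argument shows $\deg\mathcal E=\deg\mathcal L-\sum_x\operatorname{ord}_x\det\Psi_x$, and every term in this sum vanishes.

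For the Euler characteristic, apply Riemann--Roch for vector bundles on the compact Riemann surface $S$:
\[
 \chi(\mathcal E)=\deg\mathcal E+r(1-g)=\deg\mathcal L+r(1-g),
\]
using $\deg\mathcal E=\deg\det\mathcal E=\deg\det\mathcal L=\deg\mathcal L$. On $\mathbb P^1$ we have $g=0$, so if $\deg\mathcal L=0$ then $\chi(\mathcal E)=r$, which is $h^0(\mathcal E)-h^1(\mathcal E)=r$. The step I expect to need the most attention is the local determinant computation together with this representative-independence remark; the rest is standard.
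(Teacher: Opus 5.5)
Your proof is correct and follows the paper's own argument: since $\det\Psi_x=1$, the local frames of $\mathcal E$ obtained by applying $\Psi_x^{-1}$ have the same top exterior product as those of $\mathcal L$, and Riemann--Roch then gives $\chi(\mathcal E)$. One harmless slip in your side remark: the sign is reversed, and the general formula is $\deg\mathcal E=\deg\mathcal L+\sum_x\operatorname{ord}_x\det\Psi_x$ (for instance, $\Psi_x=z-x$ on $\mathcal O$ yields $\mathcal O(x)$); every term vanishes here, so your conclusion is unaffected.
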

\begin{proof}
The local frames related by $\Psi_x^{-1}$ have the same top exterior
product. The Euler characteristic follows from the Riemann--Roch theorem
for vector bundles \cite[Chapter~VI, Theorem~4.4]{Freitag2013}.
\end{proof}

We now take $S=\mathbb P^1$, finite $\mathfrak X\subset\mathbb C$,
arbitrary complex weights, and $k\ge1$. No nondegeneracy is assumed.
Using $z$ for the affine coordinate, take
\[
 \mathcal L=\mathcal O(k)\oplus\mathcal O(-k),\qquad
 g_{0\infty}=\operatorname{diag}(z^{-k},z^k),\qquad
 f_\infty=g_{0\infty}f_0.
\]
The matrix divisor is represented by
\[
 \Psi_x=I-\frac{\omega(x)^T}{z-x},\qquad
 \Psi_x^{-1}=I+\frac{\omega(x)^T}{z-x}\quad(x\in\mathfrak X).
\]
Set $\mathcal E=\Gamma_{\mathbb C^2}(\Psi)$ and write
$H^i(\mathcal F)=H^i(\mathbb P^1,\mathcal F)$.
By \eqref{cond}, the columns of $m^T$ satisfy the local conditions
of $\mathcal E$ whenever $m$ solves the DRHP.
The local condition at $x\in\mathfrak X$ is
\begin{equation}\label{eq:rank-two-local-condition}
 \binom pq\in\mathcal O(\mathcal E)_x
 \quad\Longleftrightarrow\quad
 p\in\mathcal O_x,\qquad
 q-\frac{w(x)p}{z-x}\in\mathcal O_x.
\end{equation}
The inclusion $q\mapsto(0,q)^T$ and projection $(p,q)^T\mapsto p$ give
\begin{equation}\label{eq:rank-two-extension}
 0\longrightarrow\mathcal O(-k)
 \longrightarrow\mathcal E
 \longrightarrow\mathcal O(k)\longrightarrow0.
\end{equation}
Indeed, a germ $p$ lifts to $(p,w(x)p/(z-x))^T$, and the kernel is
$\mathcal O(-k)$. Away from $\mathfrak X$ this sequence is split.

Evaluation in the fixed frame at infinity gives
\begin{equation}\label{eq:rank-two-evaluation}
 0\longrightarrow\mathcal E(-\infty)
 \longrightarrow\mathcal E
 \xrightarrow{\operatorname{ev}_\infty}\mathbb C_\infty^2
 \longrightarrow0,
\end{equation}
where $\operatorname{ev}_\infty(f)=(g_{0\infty}f)(\infty)$ and
$\mathbb C_\infty^2$ is the skyscraper sheaf of the fibre at infinity.
Twisting \eqref{eq:rank-two-extension} yields
\begin{equation}\label{eq:rank-two-twisted-extension}
 0\longrightarrow\mathcal O(-k-1)
 \longrightarrow\mathcal E(-\infty)
 \longrightarrow\mathcal O(k-1)\longrightarrow0.
\end{equation}
The twist leaves the affine residue conditions unchanged.

Define the moments and the Hankel moment matrix by
\[
 \mu_n=\sum_{x\in\mathfrak X}w(x)x^n\quad(n\ge0),\qquad
 H_w=(\mu_{i+j})_{0\le i,j<k},\qquad
 \Delta_k=\det H_w.
\]

\begin{lemma}[Normalization criterion]\label{lem:rank-two-moment-cohomology}
The connecting homomorphism of
\eqref{eq:rank-two-twisted-extension} is represented by
$H_w$ in the monomial basis and its Serre-dual basis. Thus
\[
 H^0(\mathcal E(-\infty))\simeq\ker H_w,
 \qquad H^1(\mathcal E(-\infty))\simeq\operatorname{coker}H_w.
\]
The DRHP on $\mathfrak X$ has a solution with normalization
\eqref{asymp_DRHP} if and only if $\Delta_k\ne0$.
In this case evaluation at infinity is an isomorphism
$H^0(\mathcal E)\simeq\mathbb C^2$, and the unique lifts of its
standard basis form a global frame $F=m_{\mathfrak X}^T$ with
$\det F=1$. In particular, the normalized solution is unique,
$\mathcal E\simeq\mathcal O^{\oplus2}$, $h^0(\mathcal E)=2$,
and $H^1(\mathcal E)=0$.
\end{lemma}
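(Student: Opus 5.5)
The plan is to compute the long exact cohomology sequence of \eqref{eq:rank-two-twisted-extension} on $\mathbb P^1$. Since $H^1(\mathcal O(k-1))=0$ and $H^0(\mathcal O(-k-1))=0$ for $k\ge1$, it reduces to
\[
 0\to H^0(\mathcal E(-\infty))\to H^0(\mathcal O(k-1))\xrightarrow{\ \delta\ } H^1(\mathcal O(-k-1))\to H^1(\mathcal E(-\infty))\to0.
\]
Both middle spaces are $k$-dimensional. The first has the monomial basis $1,z,\dots,z^{k-1}$. The second is Serre dual to $H^0(\mathcal O(k-1))$, and we represent its classes by Čech cocycles $z^{-j-1}$, $j=0,\dots,k-1$, on the cover $\{U_0=\mathbb C,\ U_\infty=\mathbb P^1\setminus\{0\}\}$ after shrinking to avoid $\mathfrak X$.

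To compute $\delta(z^i)$, we lift $p=z^i$ locally to $\mathcal E(-\infty)$. Near $\infty$ the lift $(p,0)^T$ is allowed, because the residue conditions are affine and the twist at $\infty$ only changes the frame. On the affine chart we take $(p,\ \sum_x w(x)x^i/(z-x))^T$, which satisfies \eqref{eq:rank-two-local-condition} at every node. The difference of the two lifts is the cocycle $q(z)=\sum_x w(x)x^i/(z-x)=\sum_{n\ge0}\mu_{i+n}z^{-n-1}$, taken in $\mathcal O(-k-1)$ on the overlap. In the frame $g_{0\infty}$ the second component carries the factor $z^{k}$, and together with the twist by $-\infty$ this shifts the index by $k+1$. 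Pairing against $z^j$ by the residue at infinity therefore picks out the coefficient $\mu_{i+j}$. Hence $\delta$ has matrix $H_w$, which gives $H^0\simeq\ker H_w$ and $H^1\simeq\operatorname{coker}H_w$. The main obstacle is this bookkeeping of frames and twists: we must check that the pieces of $q$ that are holomorphic on $U_0$ or on $U_\infty$ are exactly the coboundaries, so that only the coefficients $z^{-n-1}$ with $0\le n\le k-1$, in the appropriate normalization, survive. We will verify this against the sign and index conventions of the residue pairing.

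For the consequences, we apply \eqref{eq:rank-two-evaluation}. The lemma on matrix divisors gives $\chi(\mathcal E(-\infty))=-2+2=0$, so $h^0=h^1$ there, both equal to $\dim\ker H_w$. If $\Delta_k\ne0$, then $H^0(\mathcal E(-\infty))=H^1(\mathcal E(-\infty))=0$. The long exact sequence of \eqref{eq:rank-two-evaluation} then shows that $\operatorname{ev}_\infty:H^0(\mathcal E)\to\mathbb C^2$ is an isomorphism and that $H^1(\mathcal E)=0$, since $H^1$ of the skyscraper sheaf vanishes. Let $f_1,f_2$ be the lifts of $e_1,e_2$ and set $F=(f_1,f_2)$. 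By Corollary~\ref{det_cor}, or directly from the equality $\det\mathcal E=\det\mathcal L=\mathcal O$, $\det F$ is a global section of $\mathcal O$ equal to $1$ at infinity, so $\det F=1$. Thus $F$ is a frame, $\mathcal E\simeq\mathcal O^{2}$, and $m=F^T$ solves the DRHP with normalization \eqref{asymp_DRHP}, because evaluation in the frame $g_{0\infty}$ is the identity at $\infty$. Uniqueness holds because the difference of two normalized solutions, transposed, gives columns in $H^0(\mathcal E(-\infty))=0$.

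Conversely, let $m$ be a normalized solution. The columns of $m^T$ lie in $H^0(\mathcal E)$ and evaluate to $e_1,e_2$. Their $\det$ is identically $1$ by Corollary~\ref{det_cor}, so they frame $\mathcal E\simeq\mathcal O^2$, and then $H^0(\mathcal E(-\infty))=H^0(\mathcal O(-1)^2)=0$. Hence $\ker H_w=0$, that is, $\Delta_k\ne0$.
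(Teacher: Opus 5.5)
Your proposal is correct and follows the paper's proof. Both use the long exact sequence of \eqref{eq:rank-two-twisted-extension}, compute the connecting map by comparing lifts and pairing by residues to obtain $H_w$, then apply the evaluation sequence \eqref{eq:rank-two-evaluation} with the determinant-one argument, and use Corollary~\ref{det_cor} for the converse. The only difference is cosmetic: you compute $\delta$ on a two-set cover with the Cauchy-transform lift and read off Laurent coefficients at infinity, whereas the paper lifts locally at each node and takes residues at $x\in\mathfrak X$. The residue theorem identifies the two computations, and the bookkeeping you deferred does check out, giving exactly $\mu_{i+j}$.
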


\begin{proof}
The cohomology of line bundles on $\mathbb P^1$ and the Serre duality
theorem \cite[Chapter~VI, Theorem~6.1]{Freitag2013} give
\begin{equation}\label{eq:rank-two-twisted-cohomology}
 0\longrightarrow H^0(\mathcal E(-\infty))
 \longrightarrow\mathbb C[z]_{\le k-1}
 \xrightarrow{\delta_w}(\mathbb C[z]_{\le k-1})^*
 \longrightarrow H^1(\mathcal E(-\infty))\longrightarrow0.
\end{equation}
Lift $p$ to $(p,0)^T$ away from $\mathfrak X$ and to
$(p,w(x)p/(z-x))^T$ near $x$. Use the latter lift minus
the former for the boundary map. The residue pairing gives
\[
 \langle\delta_w(p),q\rangle
 =\sum_{x\in\mathfrak X}\operatorname{Res}_{z=x}
     \frac{w(x)p(x)q(z)\,dz}{z-x}
 =\sum_{x\in\mathfrak X}w(x)p(x)q(x),
 \qquad p,q\in\mathbb C[z]_{\le k-1}.
\]
Substituting monomials gives $H_w$; exactness gives the stated
identifications.

If $\Delta_k\ne0$, the twist is acyclic, so
\eqref{eq:rank-two-evaluation} makes evaluation an isomorphism.
The wedge of the normalized lifts is a section of
$\det\mathcal E\simeq\mathcal O$ equal to $1$ at infinity.
Thus they form a unique frame of determinant one, whose transpose
solves the normalized DRHP. Conversely, a normalized solution has
determinant one by Corollary~\ref{det_cor}. Its transpose trivializes
$\mathcal E$, so the twist is acyclic and $\Delta_k\ne0$.
\end{proof}

\begin{cor}[Moment corank and splitting]\label{cor:rank-two-corank-splitting}
For $r=\operatorname{corank}H_w$, even when $\Delta_k=0$,
\[
 \mathcal E\simeq\mathcal O(r)\oplus\mathcal O(-r),\qquad
 \mathcal E(-\infty)\simeq\mathcal O(r-1)\oplus\mathcal O(-r-1).
\]
\end{cor}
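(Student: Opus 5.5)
By Grothendieck's theorem, $\mathcal E\simeq\mathcal O(a)\oplus\mathcal O(b)$ with $a\ge b$. The preceding lemma gives $\det\mathcal E\simeq\det\mathcal L\simeq\mathcal O$, since each $\Psi_x$ has determinant one because $\omega(x)^T$ is nilpotent. Hence $b=-a$, and we may write $\mathcal E\simeq\mathcal O(a)\oplus\mathcal O(-a)$ with $a\ge0$. The plan is to identify $a$ with $r=\operatorname{corank}H_w$ by computing $h^0(\mathcal E(-\infty))$ in two ways.

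On one side, $\mathcal E(-\infty)\simeq\mathcal O(a-1)\oplus\mathcal O(-a-1)$. For $a\ge1$ this gives $h^0=a$, and for $a=0$ it gives $h^0=0$. In either case $h^0(\mathcal E(-\infty))=a$. On the other side, Lemma~\ref{lem:rank-two-moment-cohomology} identifies $H^0(\mathcal E(-\infty))\simeq\ker H_w$. That identification comes from the exact sequence \eqref{eq:rank-two-twisted-cohomology}, whose derivation uses only \eqref{eq:rank-two-twisted-extension} and Serre duality, not $\Delta_k\ne0$, so it is valid here. Since $H_w$ is a square $k\times k$ matrix, $\dim\ker H_w=\operatorname{corank}H_w=r$. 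Therefore $a=r$, which gives $\mathcal E\simeq\mathcal O(r)\oplus\mathcal O(-r)$, and twisting by $\mathcal O(-1)$ gives the second isomorphism.

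As a consistency check, the cokernel side should also match. We have $h^1(\mathcal E(-\infty))=h^1(\mathcal O(-r-1))=r$ and $\dim\operatorname{coker}H_w=r$, so both agree.

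The main point that needs care is verifying that the first half of Lemma~\ref{lem:rank-two-moment-cohomology} really holds without assuming nondegeneracy. Concretely, one must check three things. First, $H^0(\mathcal O(-k-1))=0$, which makes the map $H^0(\mathcal E(-\infty))\to H^0(\mathcal O(k-1))$ injective. Second, $H^1(\mathcal O(k-1))=0$ for $k\ge1$, which makes the connecting map followed by $H^1(\mathcal E(-\infty))$ exact at the end. Third, the local condition \eqref{eq:rank-two-local-condition} defines a genuine rank-two bundle regardless of the weight values. All three hold for arbitrary complex $w$, so the argument goes through.
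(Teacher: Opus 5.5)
Your proof is correct and is essentially the paper's argument: Birkhoff--Grothendieck together with $\det\mathcal E\simeq\mathcal O$ gives $\mathcal E\simeq\mathcal O(a)\oplus\mathcal O(-a)$. The unconditional half of Lemma~\ref{lem:rank-two-moment-cohomology} then gives $a=h^0(\mathcal E(-\infty))=\dim\ker H_w=r$.
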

\begin{proof}
Since $\det\mathcal E\simeq\mathcal O$, the Birkhoff--Grothendieck theorem
\cite[Chapter~I, Theorem~2.1.1]{OSS2011} gives
$\mathcal E\simeq\mathcal O(a)\oplus\mathcal O(-a)$, $a\ge0$.
The preceding lemma gives $a=h^0(\mathcal E(-\infty))=r$.
\end{proof}

\begin{remark}
The groups $H^0(\mathcal E(-\infty))$ and $H^1(\mathcal E(-\infty))$
describe the ambiguity and obstruction to prescribing the value at infinity.
Vanishing of $H^1(\mathcal E)$ alone does not imply normalization:
$\mathcal O(1)\oplus\mathcal O(-1)$ is a counterexample.
\end{remark}

We use the following normalization principle for both point addition
and the Christoffel transformation.
For a vector bundle $B$ on $\mathbb P^1$, its \emph{determinant of cohomology}
is the one-dimensional space
\[
 \det R\Gamma(B):=\det H^0(B)\otimes(\det H^1(B))^{-1}.
\]
Here $\det V$ is the top exterior power of $V$, and the inverse of a line
denotes its dual. For acyclic $B$, this line is canonically $\mathbb C$.
For rank two and degree $-2$, we use
$\tau(B)\in(\det R\Gamma(B))^{-1}$ from
\cite[Definitions~1.9--1.10]{ArinkinBorodinTau2007}, equal to $1$ for
acyclic $B$ and $0$ otherwise.

For a finite-length torsion sheaf $Q$ on $\mathbb P^1_{\mathbb C}$,
$\operatorname{length}Q=\dim_{\mathbb C}H^0(Q)$.
In both applications below, the quotient sheaves have length one.

\begin{lemma}[Normalization via a common upper modification,
{\cite[Example~1.12]{ArinkinBorodinTau2007}}]
\label{lem:common-quotient-normalization}
Let $B,B',U$ be rank-two vector bundles on $\mathbb P^1$ with exact sequences
\[
 0\to B\to U\xrightarrow{\rho}Q\to0,\qquad
 0\to B'\to U\xrightarrow{\ell}Q'\to0,
\]
where $Q,Q'$ are torsion sheaves of equal finite length and $B$ is acyclic.
Put $V=H^0(Q)$, $V'=H^0(Q')$. Then $H^0(\rho)$ is an isomorphism,
$H^1(U)=0$, and, for $T=H^0(\ell)\circ H^0(\rho)^{-1}$,
\[
 0\to H^0(B')\to V\xrightarrow{T}V'\to H^1(B')\to0.
\]
Thus $B'$ is acyclic exactly when $T$ is invertible. Under the identification
\[
 \det R\Gamma(B)\otimes\det R\Gamma(B')^{-1}
 \simeq\det V'\otimes(\det V)^{-1}
\]
induced by the two sequences, $\tau(B')/\tau(B)=\det T$.
This is a scalar identity after bases of $V,V'$ are fixed.
\end{lemma}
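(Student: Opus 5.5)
The argument runs through the long exact cohomology sequences of the two short exact sequences, using only that torsion sheaves on $\mathbb P^1$ have no $H^1$ and that $B$ is acyclic. From $0\to B\to U\xrightarrow{\rho}Q\to0$ we get
\[
0\to H^0(B)\to H^0(U)\xrightarrow{H^0(\rho)}V\to H^1(B)\to H^1(U)\to H^1(Q)=0 .
\]
Since $H^0(B)=H^1(B)=0$, the map $H^0(\rho)$ is an isomorphism and $H^1(U)=0$. Next take the sequence of $0\to B'\to U\xrightarrow{\ell}Q'\to0$:
\[
0\to H^0(B')\to H^0(U)\xrightarrow{H^0(\ell)}V'\to H^1(B')\to H^1(U)=0 .
\]
Precomposing with $H^0(\rho)^{-1}$ replaces $H^0(U)$ by $V$ and $H^0(\ell)$ by $T$, which gives the four-term exact sequence in the statement. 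Equal lengths mean $\dim V=\dim V'$, so $T$ is invertible exactly when both $H^0(B')$ and $H^1(B')$ vanish, that is, exactly when $B'$ is acyclic.

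For the determinant identity, first fix the identification. Each short exact sequence of sheaves gives a canonical isomorphism
\[
\det R\Gamma(U)\simeq\det R\Gamma(B)\otimes\det V,
\qquad
\det R\Gamma(U)\simeq\det R\Gamma(B')\otimes\det V',
\]
by multiplicativity of the determinant of cohomology on exact sequences (the Knudsen--Mumford property). Dividing one by the other eliminates $\det R\Gamma(U)$ and yields the stated isomorphism
\[
\det R\Gamma(B)\otimes\det R\Gamma(B')^{-1}\simeq\det V'\otimes(\det V)^{-1}.
\]
Concretely, $B$ acyclic makes $\det R\Gamma(B)=\mathbb C$ canonically, so $\det R\Gamma(U)\simeq\det H^0(U)\simeq\det V$ via $H^0(\rho)$. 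The second sequence then identifies $\det R\Gamma(B')$ with $\det V\otimes(\det V')^{-1}$ through the exact sequence $0\to H^0(B')\to V\xrightarrow{T}V'\to H^1(B')\to0$.

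Next apply the standard fact for a finite exact complex $0\to K\to V\xrightarrow{T}V'\to C\to0$. The canonical isomorphism $\det K\otimes(\det C)^{-1}\simeq\det V\otimes(\det V')^{-1}$ has the following property: when $K=C=0$, the image of $1$ is $\det T$ viewed as an element of $\det V'\otimes(\det V)^{-1}$; when $K\neq0$ or $C\neq0$, the section induced by $T$ is $0$. The normalized $\tau$ of \cite[Definitions~1.9--1.10]{ArinkinBorodinTau2007} is by definition this canonical element: $1$ on acyclic bundles and $0$ otherwise. Hence $\tau(B)=1$, and $\tau(B')$ corresponds to $\det T$ under the identification. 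This gives $\tau(B')/\tau(B)=\det T$, and the statement becomes a scalar identity once bases of $V,V'$ are chosen.

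The main obstacle is bookkeeping of conventions rather than any substantive difficulty. One must check that the paper's $\tau$ lies in the inverse determinant line with the same orientation as the Knudsen--Mumford isomorphism, so that the answer is $\det T$ and not its inverse. One must also check the sign conventions in the composite isomorphism. Since the quotients here have length one, $\det T$ is just the scalar $T$ and no permutation signs arise. I would settle the convention by matching directly with \cite[Example~1.12]{ArinkinBorodinTau2007}.
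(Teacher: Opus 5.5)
Your proposal is correct and follows the paper's proof. The two long exact sequences give $H^0(U)\simeq V$, $H^1(U)=0$ and the four-term sequence, and the complex $[V\xrightarrow{T}V']$ computes $R\Gamma(B')$ through the acyclic trivialization of $B$. Hence $\det T$ is the canonical section of $(\det R\Gamma(B'))^{-1}\simeq\det V'\otimes(\det V)^{-1}$, which is where $\tau(B')$ lives by definition; this settles your orientation worry in favour of $\det T$ rather than its inverse.

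The one step you omit, which the paper records, is the Riemann--Roch check that $\deg B'=\deg B=-2$. Acyclicity of $B$ forces $\chi(B)=0$, so $\deg B=-2$, and equal quotient lengths give $\deg B'=\deg B$. This is needed so that $\tau(B')$, defined only for rank two and degree $-2$, makes sense at all.
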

\begin{proof}
The first cohomology sequence identifies $H^0(U)$ with $V$ and gives
$H^1(U)=0$; the second gives the displayed exact sequence.
Riemann--Roch and equality of the quotient lengths give
$\deg B=\deg B'=-2$.
The complex $[V\xrightarrow{T}V']$, in degrees zero and one, computes
$R\Gamma(B')$ using the acyclic trivialization for $B$.
Its determinant line is $\det V\otimes(\det V')^{-1}$, and
$\det T$ is its canonical section in the dual line, equal to the
acyclic unit when $T$ is invertible and zero otherwise.
\end{proof}

\section{Discrete dynamics of matrix divisors and a Lax pair}
\label{sec:matrix-divisor-dynamics}

Fix $k\ge1$, distinct points $\pi_0,\pi_1,\ldots\in\mathbb C$
and nonzero weights $w(\pi_j)$. For the finite sets
$\mathfrak X_s=\{\pi_0,\ldots,\pi_{s-1}\}$, let $\mathcal E_s$
be the bundle of Section~\ref{sec:discrete-rhp} and put
\[
 \widetilde{\mathcal E}_s=\mathcal E_s(-\infty),\qquad
 \widetilde{\mathcal L}=\mathcal O(k-1)\oplus\mathcal O(-k-1).
\]
In the affine trivialization by $1$, twisting leaves the residue
conditions unchanged and changes the transition matrix to
\begin{equation}\label{eq:dynamics-twisted-transition}
 \widetilde g_{0\infty}(z)=z g_{0\infty}(z)
 =\operatorname{diag}(z^{1-k},z^{k+1}).
\end{equation}
We use only indices for which these points exist. Write
\[
 \mu_n(s)=\sum_{x\in\mathfrak X_s}w(x)x^n,\qquad
 H_s=(\mu_{i+j}(s))_{0\le i,j<k},\qquad
 \Delta_k(s)=\det H_s.
\]
By Lemma~\ref{lem:rank-two-moment-cohomology},
$\Delta_k(s)\ne0$ is equivalent to acyclicity of $\widetilde{\mathcal E}_s$.
In this case the normalized DRHP matrix gives the frame
\begin{equation}\label{eq:dynamics-rational-frame}
 F_s=m_s^T,\qquad m_s:=m_{\mathfrak X_s},\qquad
 g_{0\infty}F_s=I+O(z^{-1}),\qquad \det F_s=1.
\end{equation}
It trivializes $\mathcal E_s$ and identifies
$\mathcal O(-\infty)^{\oplus2}$ with $\widetilde{\mathcal E}_s$.

For $\Delta_k(s)\ne0$, put $e(z)=(1,z,\ldots,z^{k-1})^T$.
The reproducing kernel on $\mathbb C[z]_{\le k-1}$ is characterized by
\begin{equation}\label{eq:dynamics-reproducing-kernel}
 \begin{gathered}
 K_{k-1}^{(s)}(z,u)=e(z)^T H_s^{-1}e(u),\\
 \sum_{x\in\mathfrak X_s}w(x)p(x)K_{k-1}^{(s)}(x,u)=p(u),
 \qquad p\in\mathbb C[z]_{\le k-1}.
 \end{gathered}
\end{equation}
At $s=k$, Vandermonde factorization gives $\Delta_k(k)\ne0$.

\begin{lemma}[Adding a point]\label{lem:dynamics-point-addition}
Assume $\Delta_k(s)\ne0$. Let
$\mathcal U_s\subset\widetilde{\mathcal E}_s(\pi_s)$ allow a
simple pole at $\pi_s$ with residue in $\mathbb C e_2$,
\begin{equation}\label{eq:dynamics-two-sequences}
 \begin{aligned}
 0&\longrightarrow\widetilde{\mathcal E}_s
 \longrightarrow\mathcal U_s
 \xrightarrow{\rho_s}\mathbb C_{\pi_s}\longrightarrow0,\\
 0&\longrightarrow\widetilde{\mathcal E}_{s+1}
 \longrightarrow\mathcal U_s
 \xrightarrow{\ell_s}\mathbb C_{\pi_s}\longrightarrow0,
 \end{aligned}
\end{equation}
where $\rho_s(p,q)=\operatorname{Res}_{z=\pi_s}q(z)$ and
$\ell_s(p,q)=\rho_s(p,q)-w(\pi_s)p(\pi_s)$.
These formulas also define the quotient maps after twisting by
$\mathcal O(\infty)$.
The unique section $h_s\in H^0(\mathcal U_s)$ with $\rho_s(h_s)=1$ is
\begin{equation}\label{eq:dynamics-normalizing-section}
 h_s(z)=\begin{pmatrix}
 -K_{k-1}^{(s)}(z,\pi_s)\\[4pt]
 \displaystyle\frac1{z-\pi_s}
 -\sum_{x\in\mathfrak X_s}
 \frac{w(x)K_{k-1}^{(s)}(x,\pi_s)}{z-x}
 \end{pmatrix}.
\end{equation}
With
\begin{equation}\label{eq:dynamics-epsilon-kernel}
 \varepsilon_s=\ell_s(h_s)
 =1+w(\pi_s)K_{k-1}^{(s)}(\pi_s,\pi_s)
 =\frac{\Delta_k(s+1)}{\Delta_k(s)},
\end{equation}
the cohomology of the new bundle is given by
\begin{equation}\label{eq:dynamics-scalar-cohomology}
 0\longrightarrow H^0(\widetilde{\mathcal E}_{s+1})
 \longrightarrow\mathbb C\xrightarrow{\varepsilon_s}\mathbb C
 \longrightarrow H^1(\widetilde{\mathcal E}_{s+1})
 \longrightarrow0.
\end{equation}
If $\varepsilon_s\ne0$, then
\begin{equation}\label{eq:dynamics-first-lax}
 \begin{gathered}
 m_{s+1}(z)=\left(I+\frac{N_s}{z-\pi_s}\right)m_s(z),\qquad N_s^2=0,\\
 N_s=\frac{m_s(\pi_s)\omega(\pi_s)m_s(\pi_s)^{-1}}{\varepsilon_s}.
 \end{gathered}
\end{equation}
\end{lemma}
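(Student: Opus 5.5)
The plan is to apply Lemma~\ref{lem:common-quotient-normalization} with $B=\widetilde{\mathcal E}_s$, $B'=\widetilde{\mathcal E}_{s+1}$, $U=\mathcal U_s$ and $Q=Q'=\mathbb C_{\pi_s}$. Then everything reduces to computing the scalar $T=\ell_s\circ\rho_s^{-1}$ on $H^0$, which is $\ell_s(h_s)$.

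\textbf{Step 1: the two sequences.} Using the local description \eqref{eq:rank-two-local-condition}, $\mathcal U_s$ agrees with $\widetilde{\mathcal E}_s$ away from $\pi_s$. Near $\pi_s$ it consists of pairs $(p,q)$ with $p$ holomorphic and $q$ having at most a simple pole. Hence $\rho_s$ is surjective onto the length-one skyscraper, with kernel $\widetilde{\mathcal E}_s$. The condition $\ell_s=0$ reads $q-w(\pi_s)p/(z-\pi_s)\in\mathcal O_{\pi_s}$, which is exactly the residue condition defining $\mathcal E_{s+1}$ at the new node; the conditions at the old nodes and at infinity are unchanged. So both sequences in \eqref{eq:dynamics-two-sequences} are exact. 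Acyclicity of $\widetilde{\mathcal E}_s$ comes from $\Delta_k(s)\ne0$ via Lemma~\ref{lem:rank-two-moment-cohomology}. Lemma~\ref{lem:common-quotient-normalization} then gives that $H^0(\rho_s)$ is an isomorphism, which yields existence and uniqueness of $h_s$, and it also gives \eqref{eq:dynamics-scalar-cohomology} with $T=\ell_s(h_s)$.

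\textbf{Step 2: verifying $h_s$.} I will check directly that \eqref{eq:dynamics-normalizing-section} is a global section of $\mathcal U_s$.
\begin{itemize}
\item At each $x\in\mathfrak X_s$, the residue of the second entry is $-w(x)K(x,\pi_s)$, which equals $w(x)p(x)$ for $p=-K(\cdot,\pi_s)$, as required.
\item At $\pi_s$, the residue of the second entry is $1$, so $\rho_s(h_s)=1$.
\item At infinity, the first entry lies in $\mathbb C[z]_{\le k-1}$, which is the right growth for the twisted transition $\operatorname{diag}(z^{1-k},z^{k+1})$.
\item Still at infinity, the second entry must be $O(z^{-k-1})$. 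Expanding in powers of $1/z$, the coefficient of $z^{-j-1}$ is $\pi_s^j-\sum_x w(x)x^jK(x,\pi_s)$. This vanishes for $j\le k-1$ by the reproducing property \eqref{eq:dynamics-reproducing-kernel}.
\end{itemize}
With the section verified, $\ell_s(h_s)=1-w(\pi_s)\bigl(-K(\pi_s,\pi_s)\bigr)$, which is the first expression in \eqref{eq:dynamics-epsilon-kernel}. The determinant ratio follows from the rank-one update $H_{s+1}=H_s+w(\pi_s)e(\pi_s)e(\pi_s)^T$ and the matrix determinant lemma.

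\textbf{Step 3: the Lax matrix.} If $\varepsilon_s\ne0$, then $m_{s+1}$ exists by Lemma~\ref{lem:rank-two-moment-cohomology}. Consider $R=m_{s+1}m_s^{-1}$, whose inverse factor is legitimate because $\det m_s=1$. The plan is to show:
\begin{enumerate}
\item $R$ is entire away from $\pi_s$. At the old nodes both factors have the form $H_x(I+\omega/(z-x))$ by \eqref{cond}, so the $\omega$-factors cancel.
\item $R$ has at most a simple pole at $\pi_s$.
\item $R\to I$ at infinity, since both $m$ carry the same normalization $\operatorname{diag}(z^{k},z^{-k})$.
\end{enumerate}
Liouville's theorem then gives $R=I+N_s/(z-\pi_s)$. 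Since $\det R=1$ and $N_s$ has rank at most one (from the residue condition), we get $N_s^2=0$.

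\textbf{Main obstacle.} The hardest step is identifying $N_s$ explicitly. I will impose the residue condition $\operatorname{Res}m_{s+1}=\lim m_{s+1}\,\omega(\pi_s)$ at $\pi_s$, writing $m_{s+1}=(I+N/(z-\pi_s))m_s$ with $m_s$ regular there. This gives two equations:
\[
Nm_s(\pi_s)=m_s(\pi_s)\omega+N m_s'(\pi_s)\omega,\qquad N m_s(\pi_s)\omega=0.
\]
The first suggests the ansatz $N=c\,m_s(\pi_s)\omega\, m_s(\pi_s)^{-1}$; the ansatz is consistent with the second equation because $\omega^2=0$. Solving for $c$ gives
\[
c^{-1}=1-\bigl(\omega\, m_s(\pi_s)^{-1}m_s'(\pi_s)\bigr)_{12}/w(\pi_s).
\]
To finish, this must be identified with $\varepsilon_s$ using the explicit entries of Theorem~\ref{thm2} and the Christoffel--Darboux formula $K_{k-1}(z,z)=h_{k-1}^{-1}(P_k'P_{k-1}-P_{k-1}'P_k)$. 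A cleaner alternative, if that computation proves messy, is to read the new frame off the section $h_s$ from Step~2, which already encodes $N_s$ via $\varepsilon_s$.
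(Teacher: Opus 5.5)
Your Steps 1 and 2 match the paper's argument: the two sequences, Lemma~\ref{lem:common-quotient-normalization}, the residue and reproducing-kernel check of $h_s$, and the rank-one determinant update. The structural part of Step 3 ($R=m_{s+1}m_s^{-1}=I+N_s/(z-\pi_s)$ with $N_s^2=0$) is also fine. The gap is that the actual content of \eqref{eq:dynamics-first-lax} is never established: you do not show that the scalar multiplying $m_s(\pi_s)\omega m_s(\pi_s)^{-1}$ is exactly $\varepsilon_s^{-1}$. The partial computation toward it also has problems.

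\begin{itemize}
\item Your formula for $c^{-1}$ uses the wrong entry. Write $m_0=m_s(\pi_s)$, $X=m_0^{-1}m_s'(\pi_s)$ and $\omega=\omega(\pi_s)$. Then $\omega X\omega=w(\pi_s)X_{21}\,\omega$, so $c^{-1}=1-w(\pi_s)X_{21}$. Your expression $(\omega X)_{12}/w(\pi_s)$ equals $X_{22}$, which involves the Cauchy transforms and is not $-w(\pi_s)K^{(s)}_{k-1}(\pi_s,\pi_s)$.
\item The ansatz is not justified. The first residue equation reads $N(I-m_s'(\pi_s)\omega m_0^{-1})=m_0\omega m_0^{-1}$. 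The bracket has determinant $1-w(\pi_s)X_{21}=c^{-1}$, and only its invertibility forces $N=c\,m_0\omega m_0^{-1}$.
\item The planned use of the entries $h_{k-1}^{-1}P_{k-1}$ from Theorem~\ref{thm2} is not available here. For complex weights the lemma assumes only $\Delta_k(s)\ne0$, so $\Delta_{k-1}(s)$ may vanish.
\end{itemize}

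To close your route, let $m_{ij}$ be the entries of $m_s$. Use the Christoffel--Darboux identity written in these entries, $K^{(s)}_{k-1}(z,u)=\bigl(m_{11}(z)m_{21}(u)-m_{21}(z)m_{11}(u)\bigr)/(z-u)$. It follows from the orthogonality of $m_{11}$ and $m_{21}$ together with $\det m_s=1$. Since $\det m_s=1$, one has $X_{21}=m_{11}m_{21}'-m_{21}m_{11}'=-K^{(s)}_{k-1}(\pi_s,\pi_s)$ at $\pi_s$. Hence $c^{-1}=\varepsilon_s$.

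The paper bypasses this computation by the route you list only as an alternative. For $\varepsilon_s\ne0$, the map $f\mapsto f-\varepsilon_s^{-1}\ell_s(f)h_s$ sends $H^0(\mathcal E_s)$ into $H^0(\mathcal E_{s+1})$. It preserves values at infinity because $g_{0\infty}h_s=O(z^{-1})$, so it carries the normalized frame $F_s=m_s^T$ to $F_{s+1}$. Uniqueness of $h_s$ gives $h_s(z)=F_s(z)F_s(\pi_s)^{-1}e_2/(z-\pi_s)$, since the right-hand side has residue $e_2$ and is regular at infinity in the twisted frame. Columns of $F_s$ are regular at $\pi_s$, so $\ell_s(F_s)=-w(\pi_s)e_1^TF_s(\pi_s)$. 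Therefore $F_{s+1}(z)=F_s(z)\bigl(I+F_s(\pi_s)^{-1}\omega^TF_s(\pi_s)/(\varepsilon_s(z-\pi_s))\bigr)$. Its transpose is \eqref{eq:dynamics-first-lax}, with $\varepsilon_s$ appearing automatically and no Christoffel--Darboux input.
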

\begin{proof}
The two kernels impose the absence of a pole and the new residue
condition, respectively. Lemma~\ref{lem:common-quotient-normalization}
identifies $H^0(\mathcal U_s)$ with $\mathbb C$ via $\rho_s$ and gives
\eqref{eq:dynamics-scalar-cohomology}, with $T=\ell_s(h_s)$.
The displayed $h_s$ has the required residues. Its second entry
is $O(z^{-k-1})$ by \eqref{eq:dynamics-reproducing-kernel}, so it is
holomorphic at infinity in the twisted frame.
Applying $\ell_s$ and the rank-one determinant identity gives
\eqref{eq:dynamics-epsilon-kernel}.

For $\varepsilon_s\ne0$, the normalized transfer is
\begin{equation}\label{eq:dynamics-section-projection}
 H^0(\mathcal E_s)\longrightarrow H^0(\mathcal E_{s+1}),\qquad
 f\longmapsto f-\varepsilon_s^{-1}\ell_s(f)h_s.
\end{equation}
Its image lies in the kernel of $\ell_s$, and it preserves evaluation at infinity since
$g_{0\infty}h_s=O(z^{-1})$. Uniqueness gives
\[
 h_s(z)=\frac{F_s(z)F_s(\pi_s)^{-1}e_2}{z-\pi_s}.
\]
The right-hand side has residue $e_2$ and is regular at infinity
in the twisted frame. Applying the transfer to the columns of $F_s$
and transposing proves \eqref{eq:dynamics-first-lax}, extending
\cite[Theorem~3.1(a), Lemma~5.1]{Borodin_2003} to complex weights.
\end{proof}

\begin{remark}
The sequences \eqref{eq:dynamics-two-sequences} remain exact even
when a moment determinant vanishes. Their equal length-one quotients
preserve $\det\widetilde{\mathcal E}_s\simeq\mathcal O(-2\infty)$.
If $\Delta_k(s)\ne0$ and $\varepsilon_s=0$, then
$h^0(\widetilde{\mathcal E}_{s+1})=h^1(\widetilde{\mathcal E}_{s+1})=1$.
On the nondegenerate locus, the rational identity
$\iota_s:\widetilde{\mathcal E}_s\dashrightarrow
\widetilde{\mathcal E}_{s+1}$ has matrix
\begin{equation}\label{eq:dynamics-identity-matrix}
 B_s(z)=F_{s+1}(z)^{-1}F_s(z)=I-\frac{N_s^T}{z-\pi_s}.
\end{equation}
Unlike \eqref{eq:dynamics-section-projection}, this identity need not
preserve global holomorphic sections after twisting by
$\mathcal O(\infty)$.
\end{remark}

\subsection{Transport along the lattice}

Assume in addition that
\begin{equation}\label{eq:dynamics-pearson}
 \begin{gathered}
 \sigma(z)=\eta z+\beta,\qquad \eta\ne0,\qquad
 \sigma(\pi_{j+1})=\pi_j,\\
 D(z)=\operatorname{diag}(d_1(z),d_2(z)),\qquad d_1(\pi_0)=0,\\
 \frac{w(\pi_{j-1})}{w(\pi_j)}
 =\eta\frac{d_1(\pi_j)}{d_2(\pi_j)}\qquad(j\ge1).
 \end{gathered}
\end{equation}
Here $d_1,d_2$ are nonzero polynomials satisfying
$d_2(\pi_j)\ne0$ at all relevant indices $j\ge1$.
Put $n=\max(\deg d_1,\deg d_2)$.
Identify $\sigma^*\mathcal O(-\infty)$ with
$\mathcal O(-\infty)$ by $1$ in the affine trivializations.
Multiplication by $D$ then defines
\[
 \mathscr D_s:\sigma^*\widetilde{\mathcal E}_s
 \longrightarrow\widetilde{\mathcal E}_{s+1}(n\infty).
\]

\begin{lemma}[Lax pair, \cite{Borodin_2003}]\label{lem:dynamics-lax-pair}
If $\Delta_k(s)\Delta_k(s+1)\ne0$, the matrix of $\mathscr D_s$
in the frames $F_s(\sigma z),F_{s+1}(z)$ is $M_s(z)^T$, where
\begin{equation}\label{eq:dynamics-second-lax}
 m_s(\sigma z)D(z)=M_s(z)m_{s+1}(z).
\end{equation}
It is polynomial, with $\deg M_s\le n$ and $\det M_s=d_1d_2$.
If $d_i(z)=\lambda_i z^n+O(z^{n-1})$, then
\[
 M_s(z)=\operatorname{diag}(\eta^k\lambda_1,\eta^{-k}\lambda_2)
 z^n+O(z^{n-1}).
\]
If also $\Delta_k(s+2)\ne0$, then
\begin{equation}\label{eq:dynamics-compatibility}
 B_{s+1}(z)M_s(z)^T=M_{s+1}(z)^T B_s(\sigma z).
\end{equation}
\end{lemma}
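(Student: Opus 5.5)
The plan is to define $M_s$ by \eqref{eq:dynamics-second-lax} and verify its properties one by one. Since $\Delta_k(s)\Delta_k(s+1)\ne0$, Lemma~\ref{lem:rank-two-moment-cohomology} gives normalized solutions $m_s$ and $m_{s+1}$, both of determinant one. I therefore set
\[
 M_s(z):=m_s(\sigma z)D(z)m_{s+1}(z)^{-1}.
\]
This is meromorphic on $\mathbb C$, and $\det M_s=d_1d_2$ is immediate. The statement about frames is bookkeeping. With $F_s=m_s^T$, multiplication by $D$ (a diagonal matrix, so $D^T=D$) satisfies $D\,F_s(\sigma z)=F_{s+1}(z)M_s(z)^T$. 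Hence once $M_s$ is shown to be polynomial of degree at most $n$, the matrix of $\mathscr D_s$ is $M_s^T$, and $\mathscr D_s$ really lands in $\widetilde{\mathcal E}_{s+1}(n\infty)$.

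The main step is to show that $M_s$ is entire; this is the only place where the Pearson data enter. The possible poles are of two kinds. First, $m_s(\sigma z)$ can have poles where $\sigma z=\pi_{j-1}$ with $1\le j\le s$, that is, at $z=\pi_j$; note that $\sigma\pi_{s+1}=\pi_s\notin\mathfrak X_s$. Second, $m_{s+1}^{-1}=\operatorname{adj}m_{s+1}$ can have poles at $\pi_0,\ldots,\pi_s$.

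At $z=\pi_j$ with $1\le j\le s$, I use \eqref{cond} for $m_s$ at $\pi_{j-1}$ and for $m_{s+1}$ at $\pi_j$. Since $\sigma z-\pi_{j-1}=\eta(z-\pi_j)$, this gives
\[
 M_s=H(\sigma z)\Bigl(I+\frac{\omega(\pi_{j-1})}{\eta(z-\pi_j)}\Bigr)D(z)\Bigl(I-\frac{\omega(\pi_j)}{z-\pi_j}\Bigr)H'(z)^{-1},
\]
with $H,H'$ holomorphic and invertible near the point. Because $e_{12}De_{12}=0$, the only singular part of the middle product is its $(1,2)$ entry,
\[
 \frac{w(\pi_{j-1})d_2(z)/\eta-w(\pi_j)d_1(z)}{z-\pi_j}.
\]
This is regular, because the Pearson relation in \eqref{eq:dynamics-pearson} makes the numerator vanish at $\pi_j$ (using $d_2(\pi_j)\ne0$). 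At $z=\pi_0$ only the factor $D(I-\omega(\pi_0)/(z-\pi_0))$ is singular. Its $(1,2)$ entry is $-w(\pi_0)d_1(z)/(z-\pi_0)$, which is regular because $d_1(\pi_0)=0$.

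The growth at infinity comes from \eqref{asymp_DRHP}. I write $m_s(\sigma z)=(I+O(z^{-1}))\operatorname{diag}((\eta z+\beta)^k,(\eta z+\beta)^{-k})$ and $m_{s+1}(z)^{-1}=\operatorname{diag}(z^{-k},z^k)(I+O(z^{-1}))$. Substituting these gives
\[
 M_s=(I+O(z^{-1}))\operatorname{diag}(\eta^k,\eta^{-k})(I+O(z^{-1}))D(z)(I+O(z^{-1})).
\]
So $M_s=O(z^n)$, and by Liouville it is a polynomial of degree at most $n$ with the stated leading coefficient $\operatorname{diag}(\eta^k\lambda_1,\eta^{-k}\lambda_2)$.

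For the compatibility relation \eqref{eq:dynamics-compatibility}, I transpose \eqref{eq:dynamics-identity-matrix} to get $m_{s+1}=B_s^{-T}m_s$, and likewise $m_{s+2}=B_{s+1}^{-T}m_{s+1}$. Substituting these into the definition of $M_{s+1}$ gives
\[
 M_{s+1}(z)=B_s(\sigma z)^{-T}M_s(z)B_{s+1}(z)^T.
\]
Transposing this yields \eqref{eq:dynamics-compatibility}. I expect the residue cancellation at $z=\pi_j$ to be the only delicate point; in particular, the index ranges must match, with $m_s$ contributing poles at $\pi_1,\ldots,\pi_s$ and $m_{s+1}$ at $\pi_0,\ldots,\pi_s$.
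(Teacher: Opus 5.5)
Your proof is correct and is essentially the argument the paper invokes by citing \cite[Theorems~3.1(b,c) and~5.2(a)]{Borodin_2003}. You define $M_s=m_s(\sigma z)D(z)m_{s+1}(z)^{-1}$, cancel the poles at $\pi_1,\ldots,\pi_s$ using the Pearson relation and at $\pi_0$ using $d_1(\pi_0)=0$, bound the growth at infinity by the normalization and apply Liouville, and obtain the compatibility relation by substituting $m_{r+1}=B_r^{-T}m_r$ for $r=s,s+1$. As in the paper, the only ingredient beyond Borodin's argument is Lemma~\ref{lem:rank-two-moment-cohomology}, which gives existence of the normalized solutions and $\det m_s=1$ for complex weights when the determinants are nonzero.
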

The proofs of \cite[Theorems~3.1(b,c) and~5.2(a)]{Borodin_2003}
use positivity only to ensure existence of the normalized DRHP solutions.
By Lemma~\ref{lem:rank-two-moment-cohomology}, the stated determinant
conditions suffice, so the same proofs apply to complex weights. For the last available node, see
\cite[Remark~3.2]{Borodin_2003}.

The compatibility equation gives a rational update.
Put $a=\pi_{s+1}$, so $\det M_s(a)=d_1(a)d_2(a)\ne0$.
Where the scalar denominator is nonzero,
\begin{equation}\label{eq:dynamics-rational-update}
 \begin{aligned}
 N_{s+1}
 &=\frac{M_s(a)^{-1}N_sM_s(a)}
 {\eta+\operatorname{tr}(N_sM_s'(a)M_s(a)^{-1})},\\
 M_{s+1}(z)
 &=\left(I+\frac{N_s}{\eta(z-a)}\right)M_s(z)
 \left(I-\frac{N_{s+1}}{z-a}\right).
 \end{aligned}
\end{equation}
Indeed, the vanishing of the simple pole is equivalent to
\[
 \bigl(M_s(a)+\eta^{-1}N_sM_s'(a)\bigr)N_{s+1}
 =\eta^{-1}N_sM_s(a).
\]
The rank-one identity $N_sBN_s=\operatorname{tr}(N_sB)N_s$ solves this
equation with the displayed denominator and gives
$N_sM_s(a)N_{s+1}=0$. This cancels the double pole, so $M_{s+1}$ is polynomial.

\subsection{Relation to rational \texorpdfstring{$d$}{d}-connections}

Assume $\sigma(z)=z-1$.
Here a rational $d$-connection is a generically invertible rational
transport from the fibre at $z$ to that at $z+1$
\cite[Definition~2.1]{arinkin2007}.
The fixed ambient transport
\[
 \frac{D(z+1)}{d_1(z+1)}
 =\operatorname{diag}\left(1,\frac{d_2(z+1)}{d_1(z+1)}\right)
\]
induces a $d$-connection on each $\widetilde{\mathcal E}_s$.
When $\Delta_k(s)\ne0$, its matrix in the rational frame $F_s$ is
\[
 A_s(z)=\frac{F_s(z+1)^{-1}D(z+1)F_s(z)}{d_1(z+1)}.
\]
If also $\Delta_k(s+1)\ne0$, the second Lax equation yields
\[
 A_s(z)=\frac{B_s(z+1)^{-1}M_s(z+1)^T}{d_1(z+1)},
\]
and the modification induced by the rational identity satisfies
\begin{equation}\label{eq:dynamics-forward-connection}
 A_{s+1}(z)=B_s(z+1)A_s(z)B_s(z)^{-1}.
\end{equation}
Thus point addition changes the bundle while preserving the rational
transport. The diagonal transport preserves
\[
 \mathcal O(-k-1)\hookrightarrow\widetilde{\mathcal E}_s,
 \qquad q\longmapsto(0,q)^T,
\]
so these connections are reducible. An upper elementary modification
changes degree $-2$ to $-1$ and preserves the invariant rational line.

\section{Moduli spaces under the Christoffel transformation}
\label{sec:quadratic-christoffel}

We define the moduli problems $\mathcal M_s$ and $\widehat{\mathcal M}_s$
by prescribed local data and data at infinity and construct their point-addition
maps from local lattices. Elementary modifications identify the common
stable loci. Pearson data of degree one or two provide rational spectral
coordinates for both problems.
For Pearson weights, we assume $\epsilon=1$ in
$\varrho(z)=\epsilon d(-z-1)/d(z)$, as for the symmetric Hahn and
Krawtchouk weights \cite[Equations~(4.1),~(4.7)]{NNS2026}.

\subsection{Weights, bundles and normalization}
\label{subsec:symmetric-reduction}

Fix $k\ge2$ and a locally finite symmetric set $\mathfrak X\subset\mathbb R$.
Let $w:\mathfrak X\to\mathbb R_{>0}$ be even, with
\begin{equation}\label{eq:positive-moment-assumption}
 \sum_{x\in\mathfrak X}(1+|x|^{4k-2})w(x)<\infty.
\end{equation}
Put $\mathfrak X_s=\mathfrak X\cap[-s,s]$, $\widehat w=x^2w$ and
$y=z^2$. At every cutoff used, assume at least $k$ distinct positive
squared nodes. For $v=w,\widehat w$, write
\[
 \mu_s[v]=\sum_{x\in\mathfrak X_s}v(x)\delta_{x^2},\qquad
 \Delta_m[v,s]=\det\left(\sum_{x\in\mathfrak X_s}x^{2(i+j)}v(x)\right)_{0\le i,j<m}.
\]
Here $\delta_a$ denotes the unit point mass at $a$.
Thus each pair of nodes $x,-x\in\mathfrak X_s$, with $x>0$,
contributes mass $2v(x)$ at $x^2$; if $0\in\mathfrak X_s$,
the mass at zero is $v(0)$.
If $0\notin\mathfrak X$, we set $w(0)=\widehat w(0)=0$.

Set $R_s[v](y)=\sum_{x\in\mathfrak X_s}v(x)/(y-x^2)$.
The basic identity for the weight transformation is
\begin{equation}\label{eq:christoffel-cauchy-identity}
 R_s[\widehat w]=yR_s[w]-\sum_{x\in\mathfrak X_s}w(x).
\end{equation}
A holomorphic local section of $\mathcal G_s[v]$ is a meromorphic pair
$(\phi,\chi)^T$ such that $\phi,\chi-R_s[v]\phi$ are holomorphic at finite
points and $\phi=O(y^{k-1})$, $\chi=O(y^{-k})$ at infinity.
Its elementary lower modification at infinity is defined by
\begin{equation}\label{eq:symmetric-polynomial-kernel}
 0\longrightarrow\mathcal F_s[v]\longrightarrow\mathcal G_s[v]
 \xrightarrow{\ (\phi,\chi)^T\mapsto[y^{-k}]\chi\ }\mathbb C_\infty
 \longrightarrow0.
\end{equation}
Here $\mathbb C_p$ is the length-one skyscraper sheaf at $p$, and brackets
extract Laurent coefficients. Thus $\mathcal F_s[v]$ imposes
$\chi=O(y^{-k-1})$, with degrees $\deg\mathcal G_s[v]=-1$ and
$\deg\mathcal F_s[v]=-2$.
The lifts
\[
 (\phi,\chi)^T\mapsto(\phi(z^2),z\chi(z^2))^T,\qquad
 (\phi,\chi)^T\mapsto(z\phi(z^2),\chi(z^2))^T
\]
identify the local sections of $\mathcal F_s[w]$ with sections having
even first and odd second components at index $2k$, and those of
$\mathcal F_s[\widehat w]$ with sections having odd first and even second
components at index $2k+1$.
The residue conditions and conditions at infinity in Sections~2--3 give
these identifications; their polynomial form is \cite[Equation~(3.8)]{NNS2026}.

In the local frames
$C_s[v]=\bigl(\begin{smallmatrix}1&0\\R_s[v]&1\end{smallmatrix}\bigr)$,
$H=\operatorname{diag}(1,y)$ satisfies
\begin{equation}\label{eq:alt-christoffel-lattice}
 H_s^{\mathrm{res}}:=C_s[\widehat w]^{-1}HC_s[w]
 =\begin{pmatrix}1&0\\\sum_{x\in\mathfrak X_s}w(x)&y\end{pmatrix}.
\end{equation}
Thus the weight transformation changes the bundle only at zero and
infinity. The following diagram gives their intrinsic description.

Let $\mathcal E$ have rank two and degree $-1$, with lines
$\ell_0\subset\mathcal E_0$ and $\ell_+\subset\mathcal E_\infty$.
Let $\mathcal U$ be its elementary upper modification at zero along
$\ell_0$ \cite[Example~1.6]{ArinkinBorodinTau2007}: its sections may have
a simple pole whose $y^{-1}$ coefficient lies in $\ell_0$.
Put $Q_0=\mathcal U/\mathcal E$, and let $Q_\infty$ be the skyscraper sheaf
with fibre $\mathcal E_\infty/\ell_+$. Define
\begin{equation}\label{eq:intrinsic-christoffel-sequences}
 \mathcal K=\ker(\mathcal E\longrightarrow Q_\infty),\qquad
 \widehat{\mathcal E}=\ker(\mathcal U\longrightarrow Q_\infty).
\end{equation}
The maps to $Q_\infty$ evaluate sections modulo $\ell_+$.
We view all four bundles inside the same meromorphic bundle.

\begin{lemma}[Commuting elementary modifications]
\label{lem:common-upper-modification}
The rows of the following diagram are exact, and both vertical
inclusions have cokernel $Q_0$:
\begin{equation}\label{eq:christoffel-diamond-sequences}
 \begin{tikzcd}[column sep=small,row sep=small]
 0\arrow[r]
 &\mathcal K\arrow[r,hook']\arrow[d,hook']
 &\mathcal E\arrow[d,hook']\arrow[dr]\\
 0\arrow[r]
 &\widehat{\mathcal E}\arrow[r,hook']
 &\mathcal U\arrow[r]
 &Q_\infty\arrow[r]&0.
 \end{tikzcd}
\end{equation}
Thus $\deg\mathcal K=-2$, $\deg\mathcal U=0$ and
$\deg\widehat{\mathcal E}=-1$, and
\begin{equation}\label{eq:christoffel-determinant-line}
 \det R\Gamma(\widehat{\mathcal E})\simeq
 \det R\Gamma(\mathcal E)\otimes H^0(Q_0)\otimes H^0(Q_\infty)^{-1}.
\end{equation}
\end{lemma}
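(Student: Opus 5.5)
Both modifications are local at distinct points, $0$ and $\infty$, so they commute, and the lemma reduces to exactness at the level of stalks plus degree and determinant bookkeeping. Since all four sheaves sit inside the same meromorphic bundle, every map in \eqref{eq:christoffel-diamond-sequences} is an inclusion of subsheaves or a quotient map, and commutativity is automatic.

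First, exactness of the rows. For the top row, $\mathcal E\to Q_\infty$ evaluates the fibre at $\infty$ modulo $\ell_+$. It is surjective because a local frame at $\infty$ can be chosen with its second vector spanning a complement of $\ell_+$. Its kernel is $\mathcal K$ by definition, so $0\to\mathcal K\to\mathcal E\to Q_\infty\to0$ is exact. For the bottom row, note that $\mathcal U$ agrees with $\mathcal E$ on $\mathbb P^1\setminus\{0\}$, so in particular $\mathcal U_\infty=\mathcal E_\infty$ canonically. This is what makes $\mathcal U\to Q_\infty$ well defined, and the diagonal arrow commutes with it by construction. Surjectivity follows as before, and the kernel is $\widehat{\mathcal E}$ by definition.

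Next, the vertical cokernels. The quotient $\mathcal U/\mathcal E=Q_0$ is supported at $0$ and has length one, since only the $y^{-1}$ coefficient in $\ell_0$ is allowed. For $\widehat{\mathcal E}\subset\mathcal U$ and $\mathcal K\subset\mathcal E$, compare stalks:
\begin{itemize}
\item at points other than $0,\infty$, all four sheaves coincide;
\item at $\infty$, $\widehat{\mathcal E}_\infty=\mathcal K_\infty$ because $\mathcal U_\infty=\mathcal E_\infty$ and both are cut out by the same condition;
\item at $0$, $\widehat{\mathcal E}_0=\mathcal U_0$ and $\mathcal K_0=\mathcal E_0$.
\end{itemize}
Hence $\widehat{\mathcal E}/\mathcal K\simeq\mathcal U/\mathcal E=Q_0$, as a skyscraper at $0$. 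Equivalently, the snake lemma applied to the two rows, with identity on $Q_\infty$, gives the same cokernel. This stalkwise identification is the step needing the most care, mainly to confirm that the upper modification at $0$ does not interfere with the condition defining the quotient at $\infty$.

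Finally, the degrees and the determinant line. Each quotient has length one, and $\deg$ is additive in short exact sequences with torsion quotient (lengths count). Therefore
\begin{itemize}
\item $\deg\mathcal K=-1-1=-2$,
\item $\deg\mathcal U=-1+1=0$,
\item $\deg\widehat{\mathcal E}=0-1=-1$.
\end{itemize}
For \eqref{eq:christoffel-determinant-line}, use multiplicativity of $\det R\Gamma$ in short exact sequences. Since $Q$ is torsion, $\det R\Gamma(Q)=\det H^0(Q)$, which is $H^0(Q)$ for length one. From $0\to\mathcal E\to\mathcal U\to Q_0\to0$ we get $\det R\Gamma(\mathcal U)\simeq\det R\Gamma(\mathcal E)\otimes H^0(Q_0)$. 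From $0\to\widehat{\mathcal E}\to\mathcal U\to Q_\infty\to0$ we get $\det R\Gamma(\mathcal U)\simeq\det R\Gamma(\widehat{\mathcal E})\otimes H^0(Q_\infty)$. Eliminating $\det R\Gamma(\mathcal U)$ gives the displayed isomorphism, in the same way as in Lemma~\ref{lem:common-quotient-normalization}.
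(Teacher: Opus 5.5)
Your proposal is correct and follows the paper's argument. The paper's proof is the two-line remark that the disjoint supports at $0$ and $\infty$ give the diagram and the degrees, and that the long exact cohomology sequences give the determinant-line identity. Your stalkwise (or snake-lemma) identification of both vertical cokernels with $Q_0$ and your elimination of $\det R\Gamma(\mathcal U)$ simply spell out those two steps in detail.
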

\begin{proof}
Disjoint supports give the diagram and degrees. The long exact
cohomology sequences give the determinant-line identity.
\end{proof}

For the bundles associated with the weights, take $\ell_0$ to be the second axis in the
$C_s[w]$-frame at zero and $\ell_+$ the kernel of the coefficient
map in \eqref{eq:symmetric-polynomial-kernel} on the fibre at infinity.
Equation~\eqref{eq:alt-christoffel-lattice} identifies the bundles in the diagram as
\[
 (\mathcal E,\widehat{\mathcal E},\mathcal K)
 =(\mathcal G_s[w],H^{-1}\mathcal G_s[\widehat w],\mathcal F_s[w]).
\]
The two quotient maps on $\mathcal G_s[\widehat w]$ compare normalizations.

\begin{lemma}[Normalization under the Christoffel transformation]
\label{lem:christoffel-cohomology}
For $v=w,\widehat w$, the bundle $\mathcal F_s[v]$ is acyclic, and
\begin{equation}\label{eq:christoffel-mixed-sequences}
 \begin{aligned}
 0&\to\mathcal F_s[w]\xrightarrow H\mathcal G_s[\widehat w]
       \xrightarrow{\rho}\mathbb C_0\to0,\\
 0&\to\mathcal F_s[\widehat w]\to\mathcal G_s[\widehat w]
       \xrightarrow{\ell}\mathbb C_\infty\to0.
 \end{aligned}
\end{equation}
Here $\rho(\phi,\chi)=\chi(0)-w(0)\phi(0)$ and
$\ell(\phi,\chi)=[y^{-k}]\chi$.
If $\mathsf p_{k,s}$ is the monic degree-$k$ orthogonal polynomial for $\mu_s[w]$,
the unique global section $\mathbf h$ with $\rho(\mathbf h)=1$ satisfies
\[
 \varepsilon_H(s):=\ell(\mathbf h)=\mathsf p_{k,s}(0)=(-1)^k
 \frac{\Delta_k[\widehat w,s]}{\Delta_k[w,s]}\ne0.
\]
\end{lemma}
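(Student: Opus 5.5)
Acyclicity of $\mathcal F_s[v]$ comes first, by reduction to Lemma~\ref{lem:rank-two-moment-cohomology}. On $\mathbb P^1_y$ the bundle $\mathcal F_s[v]$ is an extension of the same shape as \eqref{eq:rank-two-twisted-extension}. The sub-sheaf $\phi=0$, $\chi=O(y^{-k-1})$ is $\mathcal O(-k-1)$. The quotient $\phi=O(y^{k-1})$ is $\mathcal O(k-1)$. The residue conditions are those of the measure $\mu_s[v]$ on the squared nodes. The connecting homomorphism is computed exactly as in the proof of Lemma~\ref{lem:rank-two-moment-cohomology}, by the residue pairing $\langle\delta(p),q\rangle=\sum_{x\in\mathfrak X_s}v(x)p(x^2)q(x^2)$. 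It is therefore represented by the Hankel matrix whose determinant is $\Delta_k[v,s]$. For positive $v$ with at least $k$ distinct positive squared nodes, this matrix is a positive definite Gram matrix; for $\widehat w$ the mass at $0$ vanishes but the positive nodes suffice. Hence $\Delta_k[v,s]>0$, and both $H^0$ and $H^1$ vanish.

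Next come the sequences \eqref{eq:christoffel-mixed-sequences}. The second is the definition \eqref{eq:symmetric-polynomial-kernel} with $v=\widehat w$. For the first, I use \eqref{eq:alt-christoffel-lattice}: $H$ maps a local section $(\phi,\chi)$ of $\mathcal F_s[w]$ to $(\phi,y\chi)$. Then $y\chi=O(y^{-k})$ at infinity, so infinity is handled. Away from $0$, $H^{\mathrm{res}}_s$ is invertible, so the two lattices agree. At $0$, $H^{\mathrm{res}}_s$ has cokernel of length one, detected by the functional $\rho$. I check this in the $C_s[\widehat w]$-frame: the entry $\chi-R_s[\widehat w]\phi$ is evaluated at $0$. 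Using \eqref{eq:christoffel-cauchy-identity} and separating the node $x=0$, whose contribution to $R_s[w]$ is $w(0)/y$, rewrites this as $\chi(0)-w(0)\phi(0)$ up to the regular part, so $\rho$ vanishes exactly on the image of $H$. Equivalently, the first sequence is the composite of the two vertical modifications of Lemma~\ref{lem:common-upper-modification}, with the $\ell_+$-kernel at infinity.

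Lemma~\ref{lem:common-quotient-normalization} now applies with $B=H\mathcal F_s[w]$ (acyclic), $B'=\mathcal F_s[\widehat w]$ (acyclic) and $U=\mathcal G_s[\widehat w]$. So $H^0(\rho)$ is an isomorphism, $\mathbf h$ exists and is unique, and $\ell(\mathbf h)=\det T\neq0$. It remains to compute $\mathbf h$. I take $\phi=\mathsf p_{k,s}$, which has degree $k$; in the $\mathcal G_s[\widehat w]$-frame $\phi$ is allowed to be $O(y^{k})$, since that frame corresponds to $H^{-1}$ times the $w$-frame, shifting degrees by one. I take $\chi=R_s[\widehat w]\mathsf p_{k,s}$ minus its polynomial part, namely $\chi(y)=\sum_x \widehat w(x)\mathsf p_{k,s}(x^2)/(y-x^2)$ suitably corrected. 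Orthogonality then gives the decay of $\chi$ and $[y^{-k}]\chi$ in terms of $(\mathsf p_{k,s},y^{k-1})$-type moments.

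The main obstacle is getting the normalization bookkeeping right: confirming that $\rho(\mathbf h)=1$ and $\ell(\mathbf h)=\mathsf p_{k,s}(0)$ with this candidate, or that one must divide by a moment $h_{k-1}$. I would first match $\mathbf h$ with $H$ applied to the second column of the $w$-frame plus a multiple of the first. Then I would read off $\rho$ at $0$ from the factor $y$ and $\ell$ at infinity from the leading coefficient. The last step, $\mathsf p_{k,s}(0)=(-1)^k\Delta_k[\widehat w,s]/\Delta_k[w,s]$, is the classical Christoffel determinant formula. Write $\mathsf p_{k,s}(y)$ as the bordered Hankel determinant of $\mu_s[w]$ divided by $\Delta_k[w,s]$, with last row $(1,y,\dots,y^k)$. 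At $y=0$, expanding along that row leaves the minor $\det(\mu_{i+j+1})_{0\le i,j<k}$, which equals $\Delta_k[\widehat w,s]$, with sign $(-1)^k$. Nonvanishing follows from the positivity established in the first step.
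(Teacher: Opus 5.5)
Your treatment of acyclicity, of the two exact sequences, of existence, uniqueness and nonvanishing of $\ell(\mathbf h)$ via Lemma~\ref{lem:common-quotient-normalization}, and of the bordered-determinant formula for $\mathsf p_{k,s}(0)$ is correct and matches the paper. There is one small correction in the local check at $0$. In $C_s[\widehat w]$-coordinates $(a,c)$, the cokernel functional of $H^{\mathrm{res}}_s$ is $c(0)-a(0)\sum_{x\in\mathfrak X_s}w(x)$, not $c(0)$; this unwinds exactly to $\rho$.

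The gap is the identity $\ell(\mathbf h)=\mathsf p_{k,s}(0)$. This is the real content of the lemma, you leave it unproved, and your candidate for $\mathbf h$ is wrong. The condition at infinity for $\mathcal G_s[\widehat w]$ is $\phi=O(y^{k-1})$, the same as for $w$. The map $H=\operatorname{diag}(1,y)$ only rescales the second component, so $\phi$ gets no degree shift. Even ignoring this, $\phi=\mathsf p_{k,s}$ cannot be normalized. For a global section one has $\chi=\sum_{x}\widehat w(x)\phi(x^2)/(y-x^2)$, hence $\rho(\phi,\chi)=-\sum_{x\in\mathfrak X_s}w(x)\phi(x^2)$. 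This vanishes for $\phi=\mathsf p_{k,s}$ by orthogonality, and $[y^{-k}]\chi$ would be the squared norm of $\mathsf p_{k,s}$ rather than $\mathsf p_{k,s}(0)$. Your fallback, $H$ applied to the second column of the $w$-frame plus a multiple of the first, also fails. Any nonzero multiple of the first column makes $\phi$ of degree $k$. $H$ of the second column alone has a nonzero $y^{1-k}$ coefficient in its second entry.

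The missing step is to turn the defining conditions of $\mathbf h$ into moment conditions, as the paper does. Take $\deg\phi\le k-1$ and $\chi$ as above, and expand at infinity. Then $\chi=O(y^{-k})$ and $\rho(\mathbf h)=1$ together mean $\sum_{x}w(x)\phi(x^2)x^{2j}=-\delta_{j0}$ for $0\le j<k$. So $\phi$ is minus the reproducing kernel of $\mu_s[w]$ on polynomials of degree less than $k$, evaluated at $0$. The same expansion gives $\ell(\mathbf h)=\sum_{x}w(x)\phi(x^2)x^{2k}$. Write $y^k=\mathsf p_{k,s}(y)+q(y)$ with $\deg q<k$. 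Orthogonality removes the $\mathsf p_{k,s}$ term, and the reproducing property gives $\ell(\mathbf h)=-q(0)=\mathsf p_{k,s}(0)$. No factor $h_{k-1}$ survives.

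In your frame language, $\mathbf h$ is $H$ applied to $y^{-1}$ times the suitably scaled combination of the two normalized $w$-columns whose first entry vanishes at $y=0$. By Christoffel--Darboux, its first entry is that kernel. The division by $y$ is exactly what your matching omitted.
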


\begin{proof}
Lemma~\ref{lem:rank-two-moment-cohomology} gives acyclicity,
and the local conditions give both sequences.
Lemma~\ref{lem:common-quotient-normalization} identifies $H^0(\mathcal G_s[\widehat w])$
with $\mathbb C$ via $\rho$ and gives the $\tau$-ratio as $\ell(\mathbf h)$.
For $\mathbf h=(\phi,\chi)^T$, the local conditions also give
\[
 \begin{gathered}
 \sum_{x\in\mathfrak X_s}\phi(x^2)x^{2j}w(x)=-\delta_{j0}\quad(0\le j<k),\\
 \ell(\mathbf h)=\sum_{x\in\mathfrak X_s}\phi(x^2)x^{2k}w(x).
 \end{gathered}
\]
By orthogonality, subtracting $\mathsf p_{k,s}$ in the last sum gives
$\ell(\mathbf h)=\mathsf p_{k,s}(0)$. Cramer's rule gives the determinant ratio.
\end{proof}

Thus, in the quotient trivializations $\rho,\ell$,
\[
 \frac{\tau(\mathcal F_s[\widehat w])}{\tau(\mathcal F_s[w])}
 =\varepsilon_H(s).
\]

\subsection{The transformed moduli problem}
\label{subsec:symmetric-general-moduli}

For point addition, take $\mathfrak X=\{-N,\ldots,N\}$, $1<k<N-1$,
or $\mathfrak X=\mathbb Z$. The indices are $k\le s<N$ in the finite
case and $s\ge k$ otherwise. Put $\iota(z)=-z-1$.
For the unit shift, $\eta=1$ in \eqref{eq:dynamics-pearson}, and its
forward Pearson ratio is
\begin{equation}\label{eq:positive-pearson-assumptions}
 \begin{gathered}
 \frac{w(j+1)}{w(j)}=\varrho(j),\qquad
 \varrho(z)=\frac{d_2(z+1)}{\eta d_1(z+1)}
 =\frac{d(-z-1)}{d(z)},\\
 m=\deg d\in\{1,2\},\qquad \nu=[z^{m-1}]d(z).
 \end{gathered}
\end{equation}
Here $d$ is real and monic, $d$ and $d\circ\iota$ are coprime, and the ratio
is finite and positive on all lattice edges.
The ratio satisfies $\varrho(z)\varrho(-z-1)=1$; the sign
$\epsilon=1$ is part of our Pearson assumption.
For the moduli charts, the roots of $d(z)d(-z-1)$ and
the points $0,-1,s,-s-1$ must be pairwise distinct.
Constant ratios and coincident singularities are discussed below.

\begin{dfn}[Symmetric connection]
A symmetric connection on $\mathcal E$ over $\mathbb P^1_y$ is a rational
transport \cite[Definition~1.2.1]{DzhamayKnizel2019}
\[
 \nabla_z:\mathcal E_{z^2}\dashrightarrow\mathcal E_{(z+1)^2},\qquad
 \nabla_{\iota(z)}\nabla_z=\mathrm{id}.
\]
\end{dfn}
Its matrix satisfies $A\circ\iota=A^{-1}$. A change $\mathbf v'=R(y)\mathbf v$ gives
$A'(z)=R((z+1)^2)A(z)R(z^2)^{-1}$.
The transports induced by the parity decomposition in Section~3 are
\begin{equation}\label{eq:christoffel-even-connection}
 \begin{gathered}
 D_w(z)=\operatorname{diag}\left(1,\frac z{z+1}\varrho(z)\right),\quad
 D_{\widehat w}(z)=\operatorname{diag}\left(1,\frac{z+1}z\varrho(z)\right),\\
 D_{\widehat w}(z)H(z^2)=H((z+1)^2)D_w(z).
 \end{gathered}
\end{equation}
They act on $\mathcal G_s[w]$ and $\mathcal G_s[\widehat w]$, respectively.
For the weight $w$, the transport on sections with odd first component
and even second component differs from $D_{\widehat w}$ by the scalar
factor $z/(z+1)$.
\begin{dfn}[Stability]
\label{def:stable-symmetric-connection}
Let $(\mathcal E,\nabla)$ be a rank-two symmetric connection
with $\deg\mathcal E=-1$.
A line subbundle $L\subset\mathcal E$ with locally free quotient
is \emph{invariant} if
\[
 \nabla_z(L_{z^2})=L_{(z+1)^2}
 \qquad\text{for generic }z.
\]
The pair $(\mathcal E,\nabla)$ is \emph{stable} if every invariant
line subbundle satisfies $\deg L\le-1$, equivalently
$\deg L<\deg\mathcal E/2$.
\end{dfn}

\begin{remark}[Stability and irreducibility]
Such a connection is called \emph{irreducible} if it has
no invariant line subbundle, in analogy with rank-two
$d$-connections; see \cite[Lemma~3.12]{arinkin2007}.
Irreducibility implies stability, but not conversely.
For $v=w,\widehat w$, the second coordinate defines
$\mathcal O(-k)\subset\mathcal G_s[v]$ with quotient
$\mathcal O(k-1)$, preserved by the diagonal transports in
\eqref{eq:christoffel-even-connection}. Thus the weight
connections are reducible; their stability under the stated
hypotheses is verified below.
\end{remark}
The source problem $\mathcal M_s$ parametrizes stable rank-two,
degree-$-1$ symmetric connections with the following local data.
A connection has type $(r_1,r_2)$ at $a$, with $r_1\le r_2$, if its matrix is
$\operatorname{diag}((z-a)^{r_1},(z-a)^{r_2})$ in regular frames of the
pulled-back source and target bundles. Reflection gives type $(-r_2,-r_1)$ at $\iota(a)$.
We prescribe types $(0,1)$ at $0$, $(-1,0)$ at the roots of $d$, their reflected types,
and $(-1,1)$ at $s,-s-1$. All other finite points are regular.
Fix the determinant $z\varrho(z)/(z+1)$ in affine determinant frames.
Isomorphisms are regular invertible bundle maps intertwining transport;
modifications are generically invertible rational bundle maps
intertwining transport. We regard the moduli problems here as groupoids
of algebraic families, with the prescribed local types in relative
regular frames, the stated conditions at infinity, and stability on every
geometric fibre. Isomorphisms of moduli problems mean equivalences of
these groupoids compatible with base change. The spectral normal forms
below give varieties parametrizing isomorphism classes on the indicated
open charts.

The normalization in a holomorphic frame at infinity depends on $\deg d$.
For quadratic $d$, require
\begin{equation}\label{eq:christoffel-infinity-data}
 A=I+B/z+O(z^{-2}),\qquad
 \operatorname{Spec}B=(d_+,d_-),\qquad d_+\ne d_-,d_-+2.
\end{equation}
For linear $d$, require $A(z)=A_\infty+O(z^{-1})$ with
$\operatorname{Spec}A_\infty=\{1,-1\}$, and prescribe the eigenvalue
expansions $1+d_+/z+O(z^{-2})$ and $-(1+d_-/z+O(z^{-2}))$.
The coefficients are ordered, and $\gamma=(d_+-d_-)/2$.
Set $\ell_0=\ker\nabla_0$. The line $\ell_+$ is the
$d_+$ eigenline of $B$ in the quadratic case and the $+1$ eigenline
of $A_\infty$ in the linear case.
\begin{remark}[Compatibility at infinity]
The parameters at infinity of $\mathcal M_s$ satisfy
\[
 d_++d_-=m+1-2\nu.
\]
Indeed, $\deg\mathcal E=-1$ gives
$\det\mathcal E\simeq\mathcal O(-1)$. Passing from the affine
determinant frame to a compatible holomorphic determinant frame
at infinity multiplies the determinant of the transport by
$(z+1)^2/z^2$. Hence, in the latter frame, it equals
\[
 \frac{z+1}{z}\varrho(z)
 =(-1)^m\left(1+\frac{m+1-2\nu}{z}+O(z^{-2})\right).
\]
Here we used that $d$ is monic and $[z^{m-1}]d=\nu$.
The prescribed expansions at infinity give the determinant
$(-1)^m(1+(d_++d_-)/z+O(z^{-2}))$, proving the relation.
\end{remark}
\begin{thm}[Elementary modifications of symmetric connections]
\label{thm:alt-canonical-christoffel}
These lines determine elementary modifications defining a symmetric connection
$\mathcal H(\mathcal E,\nabla)$ with
\begin{equation}\label{eq:alt-christoffel-data}
 \begin{gathered}
 (0,1)_0\mapsto(-1,0)_0,\quad
 (-1,0)_{-1}\mapsto(0,1)_{-1},\quad
 (\widehat d_+,\widehat d_-)=(d_+,d_-+2),\\
 \det\widehat A(z)=\frac{(z+1)^2}{z^2}\det A(z),\qquad
 \det\widehat{\mathcal E}\simeq\det\mathcal E\otimes\mathcal O([0]-[\infty]).
 \end{gathered}
\end{equation}
Other local data are unchanged. Define $\widehat{\mathcal M}_s$ by
these data and stability, independently of the construction.
Let $\mathcal M_s^\circ,\widehat{\mathcal M}_s^\circ$ be the loci with
stable forward and inverse modifications, respectively. Then
\begin{equation}\label{eq:christoffel-moduli-isomorphism}
 \mathcal H_s:\mathcal M_s^\circ\xrightarrow{\sim}\widehat{\mathcal M}_s^\circ.
\end{equation}
The inverse consists of a lower modification at zero along the kernel
of the residue and an upper modification at infinity along the
eigenline labelled by $\widehat d_-$.
\end{thm}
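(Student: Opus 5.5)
The plan is to build $\mathcal H(\mathcal E,\nabla)$ by the diagram of Lemma~\ref{lem:common-upper-modification}, with $\ell_0=\ker\nabla_0$ and $\ell_+$ the $d_+$ eigenline (or the $+1$ eigenline of $A_\infty$). We set $\widehat{\mathcal E}=\ker(\mathcal U\to Q_\infty)$. The new transport is $\nabla$ itself, viewed as a rational transport of the common meromorphic bundle. Then $\det\widehat{\mathcal E}\simeq\det\mathcal E\otimes\mathcal O([0]-[\infty])$ and $\deg\widehat{\mathcal E}=-1$ follow from the lemma. The symmetry $\nabla_{\iota(z)}\nabla_z=\mathrm{id}$ holds because the transport is unchanged. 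All remaining work is local and splits by the support of the two modifications.

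First, the finite points. The upper modification at $y=0$ changes frames over the fibre $\mathcal E_{z^2}$ at $z=0$ (source side) and at $z=-1$ (target side, since $(z+1)^2=0$). Choose a regular local frame at $z=0$ in which $A=\operatorname{diag}(1,z)$ is of type $(0,1)$, so that $\ell_0$ is the second axis. The new source frame is $\operatorname{diag}(1,y)^{-1}=\operatorname{diag}(1,z^{-2})$ applied at $y=z^2$, while the target at $(z+1)^2$ is unchanged near $z=0$. This gives $\widehat A=A\operatorname{diag}(1,z^2)^{-1}\cdot$ (regular), of type $(0,1)-(0,2)=(0,-1)$, i.e.\ $(-1,0)$ after ordering. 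The paper's mixed local relation \eqref{eq:christoffel-even-connection} with $H$ is the model computation. At $z=-1$ the same frame change acts on the target side. Reflection symmetry, $A\circ\iota=A^{-1}$, converts the type $(-1,0)$ there into $(0,1)$. This needs $\ell_0$ to be the image line of $\nabla_{-1}$, which we check via $\nabla_{-1}=\nabla_0^{-1}$. The points $0,-1$ are distinct from the roots of $d(z)d(-z-1)$ and from $s,-s-1$, so the other local types are untouched. The determinant factor $(z+1)^2/z^2$ then follows directly: source twist $z^{-2}$, target twist $(z+1)^2$.

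Next, infinity. The lower modification along $\ell_+$ uses the gauge $\operatorname{diag}(1,y^{-1})$ in a frame diagonalizing $B$, or $A_\infty$ in the linear case. This replaces $A$ by $\operatorname{diag}(1,z^2/(z+1)^2)A$ up to $O(z^{-2})$ corrections. We must first conjugate $A=I+B/z+\dots$ so that its $1/z$ term is diagonal and the off-diagonal terms vanish to second order. This is possible when $d_+-d_-\notin\{\pm1,\pm2,\dots\}$ resonances relevant at this order, and the hypothesis $d_+\ne d_-,d_-+2$ is used exactly here. The second eigenvalue then gains $-2/z$ from $(z+1)^2/z^2$ inverted on the $\ell_+$-complement. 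With the sign conventions this yields $(\widehat d_+,\widehat d_-)=(d_+,d_-+2)$, consistent with the compatibility remark, since $\widehat m+1-2\widehat\nu$ shifts by $2$. In the linear case the same gauge acts on the $-1$ eigenline expansion.

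Finally, the isomorphism \eqref{eq:christoffel-moduli-isomorphism}. We define the inverse as stated: a lower modification at $0$ along $\ker$ of the residue of $\widehat\nabla$, and an upper modification at $\infty$ along the $\widehat d_-$ eigenline. Both composites are the identity on the common meromorphic bundle, because after modifying, the relevant lines are the images of the original quotient lines; this is a local linear algebra check. Both constructions work in relative regular frames, so they commute with base change and give equivalences of groupoids. Restricting to $\mathcal M_s^\circ$ and $\widehat{\mathcal M}_s^\circ$ builds in stability by definition. I expect the main obstacle to be the infinity computation. It requires making the frame diagonalizing $B$ precise enough that the eigenline $\ell_+$ is well defined in the groupoid sense, and showing the new data again has a regular normal form $I+\widehat B/z$ rather than acquiring a leading nilpotent term. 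I would first handle it by a formal gauge reduction to second order, using $d_+-d_-\ne0,2$.
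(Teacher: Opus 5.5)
Your overall construction matches the paper's: the same transport on the common meromorphic bundle, an upper modification at $y=0$ that divides the second column by $z^2$ in a frame with second vector in $\ell_0$, reflection for the type at $-1$, and the inverse along the residue kernel and the $\widehat d_-$-eigenline. The gap is at infinity. You correctly expect trouble there, but the remedy you propose cannot work.

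Gauges of $\mathcal E$ at infinity are functions of $y$, $R(y)=R_0+R_1y^{-1}+\cdots$, hence series in $z^{-2}$. For $A=I+B/z+C/z^2+\cdots$ one gets
\[
 R((z+1)^2)\,A(z)\,R(z^2)^{-1}
 =I+\frac{R_0BR_0^{-1}}{z}+\frac{R_0CR_0^{-1}}{z^2}+O(z^{-3}),
\]
because the $R_1$ terms cancel at order $z^{-2}$. So there is no ``formal gauge reduction to second order'' under any non-resonance hypothesis. In a $B$-eigenframe, $C_{12}$ and $C_{21}$ are invariants up to rescaling the eigenvectors. (Your resonance heuristics belong to ordinary $d$-connections, where gauges are series in $z^{-1}$.) If $C_{21}\ne0$, the modified matrix $\operatorname{diag}(1,(z+1)^2)\,A\,\operatorname{diag}(1,z^{-2})$ has lower-left entry tending to $C_{21}$. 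Its leading term is then a nontrivial unipotent matrix, so the result violates the normalization at infinity. In the linear case the same computation shows that $A_1$ changes only by constant conjugation, and a nonzero $(A_1)_{21}$ would make the new lower-left entry grow like $z$.

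The missing ingredient is the reflection identity $A(-z-1)A(z)=I$, valid in any frame of $\mathcal E$ at infinity. Expanding with $1/(-z-1)=-1/z+1/z^2-\cdots$ gives $C=(B^2-B)/2$ in the quadratic case and $[A_\infty,A_1]=0$ in the linear case. Hence, in a constant eigenframe with first axis $\ell_+$, the off-diagonal entries of $A$ are $O(z^{-3})$, respectively $O(z^{-2})$. In the quadratic case the new matrix is then $I+\widehat B/z+O(z^{-2})$, with $\widehat B$ lower triangular with diagonal $(d_+,d_-+2)$. In the linear case it is bounded with leading term $\bigl(\begin{smallmatrix}1&0\\ \ast&-1\end{smallmatrix}\bigr)$. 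This triangularity is also what identifies the second axis as the $\widehat d_-$-eigenline, so that the upper modification along it recovers $\mathcal E$.

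Accordingly, the hypotheses play a different role from the one you assign them. The condition $d_+\ne d_-$ is used to define the eigenlines (via spectral projectors in families), and $d_+\ne d_-+2$ makes the new labelled eigenvalues distinct. Neither is a non-resonance condition.

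Finally, correct the sign. The second diagonal entry is multiplied by $(z+1)^2/z^2=1+2/z+\cdots$, not by $z^2/(z+1)^2$; this agrees with your own determinant factor. The shift $d_-\mapsto d_-+2$ comes from this factor, not from ``sign conventions''.
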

\begin{proof}
At zero, a frame with second vector in $\ell_0$ gives
$A(z)=(a(z),zb(z))$, with $\det(a(0),b(0))\ne0$.
The upper modification divides its second column by $z^2$, producing
type $(-1,0)$. The kernel of the residue specifies the inverse operation.
Reflection gives the result at $-1$.
At infinity the new matrix is
\[
 \operatorname{diag}(1,(z+1)^2)A(z)\operatorname{diag}(1,z^{-2}).
\]
For distinct limiting eigenvalues, reflection makes the off-diagonal
entries $O(z^{-2})$ in an eigenframe. For $A=I+B/z+C/z^2+\cdots$ it gives
$C=(B^2-B)/2$, so they are $O(z^{-3})$ in a $B$-eigenframe.
In either case the new leading data are triangular, with the
eigenline labelled by $\widehat d_-$ as second axis and coefficients $(d_+,d_-+2)$.
The eigenvalues are distinct under the stated assumptions.
The upper modification along this line reverses the operation.
The determinant follows from the exact sequences.
The prescribed local types make the kernels at zero line subbundles
in families, and the distinct labelled eigenvalues give eigenline
subbundles at infinity by their spectral projectors.
Both constructions commute with base change, as do the elementary
modifications. The inverse constructions thus give an equivalence
of these moduli problems.
\end{proof}

For the source weight $w$, the flags are those in
\eqref{eq:alt-christoffel-lattice}. Their coefficients at infinity are
\begin{equation}\label{eq:weight-infinity-coefficients}
 d_+=2-2k,\qquad d_-=2k+m-1-2\nu,\qquad
 \gamma=\nu+\frac{3-m}{2}-2k.
\end{equation}
For $m=2$, the source and target together require $\gamma\ne0,1$;
positivity alone does not imply this. For $m=1$, no such exclusions are needed.
The normalization sequences give
$\mathcal G_s[v]\simeq\mathcal O\oplus\mathcal O(-1)$.
Its only possible destabilizing line is its unique degree-zero
summand. Both components of the generating section at infinity are nonzero:
they are the leading coefficient and positive squared norm of the
monic polynomial of degree $k-1$. The line it spans at infinity is not an eigenline
of the diagonal data at infinity in the weight frame. Invariance requires this.
For $m=2$, a local holomorphic generator of this subbundle at infinity has the form
$v_0+O(z^{-2})$, since $y=z^2$; comparing the $z^{-1}$ terms in the invariance
relation gives $Bv_0\in\langle v_0\rangle$. Thus the connections associated with
both weights are stable under the stated eigenvalue conditions.

To compare spectral coordinates, multiply the transformed connection by $z/(z+1)$:
\begin{equation}\label{eq:christoffel-original-moduli}
 \mathcal C(A)=\frac z{z+1}\mathcal H(A),\qquad
 (d_+,d_-)\mapsto(d_+-1,d_-+1),\qquad \widehat\gamma=\gamma-1.
\end{equation}
This restores the source determinant and finite types and preserves
stability. For $m=2$, the target exclusions are $\widehat\gamma\ne-1,0$.
The map $\mathcal H$ represents the weight transformation, and
$\mathcal C$ has the determinant normalization used below.

\subsection{Conjugacy of the point-addition maps}
\label{subsec:christoffel-dynamics-transfer}

Define the point-addition map $\Phi_s$ by the following modification,
for $k\le s\le N-2$ on the finite lattice and $s\ge k$ on $\mathbb Z$.
Assume the stated distinctness conditions at both indices $s,s+1$.
Write $\Lambda_b(\xi)$ for the pullback of the bundle lattice along
$y=(b+\xi)^2$. Change only the lattice at $(s+1)^2$, setting
\begin{equation}\label{eq:intrinsic-image-lattice}
 \Lambda'_{s+1}(\xi)=\nabla_{s+\xi}\Lambda_s(\xi),\qquad
 \Lambda'_{-s-1}(\xi)=\Lambda'_{s+1}(-\xi).
\end{equation}
The second equality identifies the two unramified branches over
$(s+1)^2$. This removes the singularities at $s,-s-1$ and gives type
$(-1,1)$ at $s+1,-s-2$; all other prescribed data are unchanged.
The degree is unchanged because this local type has total order zero.
The inverse changes the same lattice by
\[
 \Lambda_{s+1}(\xi)
 =(\nabla'_{s+1+\xi})^{-1}\Lambda'_{s+2}(\xi),
\]
with the reflected branch defined as above.
Indeed, before each change, the transport is regular at the adjacent
point, so the two image-lattice prescriptions are inverse.
We restrict both maps to the loci where the source and target are
stable. This is the symmetric-correspondence version of the
singularity shift in \cite[Lemma~1.7(3)--(4)]{ArinkinBorodinTau2007}.

\begin{thm}[Compatibility of elementary modifications]
\label{prop:intrinsic-pair-motion}
On the common stable loci at consecutive indices, the point-addition
map for the transformed weight is
\begin{equation}\label{eq:alt-christoffel-moduli-square}
 \widehat\Phi_s=\mathcal H_{s+1}\Phi_s\mathcal H_s^{-1}.
\end{equation}
The same conjugacy holds with $\mathcal C$ in place of $\mathcal H$.
If $B_s$ and $H_s^{\mathrm{Ch}}$ represent point addition and the
modification $\mathcal H_s$, respectively, then
\[
 \widehat B_s(y)=H_{s+1}^{\mathrm{Ch}}(y)B_s(y)H_s^{\mathrm{Ch}}(y)^{-1}.
\]
\end{thm}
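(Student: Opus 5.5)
The plan is to reduce the conjugacy to a statement about local lattices with disjoint supports: the point-addition modification $\Phi_s$ changes the lattice only over $y=(s+1)^2$ (and the reflected branch), while $\mathcal H$ changes it only at $y=0$ and $y=\infty$. Since the roots of $d(z)d(-z-1)$ and the points $0,-1,s,-s-1$ (and the same at index $s+1$) are pairwise distinct by hypothesis, these supports are disjoint, so the two modifications commute as operations on lattices inside a common meromorphic bundle. This is the same disjoint-support reasoning used in Lemma~\ref{lem:common-upper-modification}.

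The key steps, in order, are as follows.

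\emph{Step 1.} Starting from $(\mathcal E,\nabla)\in\mathcal M_s^\circ$ with $\Phi_s(\mathcal E,\nabla)$ stable and in $\mathcal M_{s+1}^\circ$, I would check that both composites $\mathcal H_{s+1}\Phi_s$ and $\widehat\Phi_s\mathcal H_s$ are computed inside the same meromorphic bundle with the same rational transport $\nabla$. Elementary modifications only change lattices, not the transport. The one exception is the scalar factor $z/(z+1)$ in $\mathcal C$, which is harmless since it is regular and invertible near $(s+1)^2$.

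\emph{Step 2.} I would verify that the data defining $\mathcal H$ are unchanged by $\Phi_s$. These data are $\ell_0=\ker\nabla_0$ and the $d_+$-eigenline $\ell_+$ at infinity. Both are determined by the germ of the transport and the lattice near $0$ (resp.\ $\infty$), and $\Phi_s$ leaves those lattices untouched. In the same way, the image lattice \eqref{eq:intrinsic-image-lattice} at $(s+1)^2$ is defined by $\nabla_{s+\xi}\Lambda_s(\xi)$. Since $\mathcal H_s$ does not alter $\Lambda_s$ near $y=s^2$, the transformed prescription produces the same lattice. Hence both composites yield the same lattice at every point, and therefore the same bundle with transport.

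\emph{Step 3.} I would check that the local types and the data at infinity match those defining $\widehat{\mathcal M}_{s+1}$, using Theorem~\ref{thm:alt-canonical-christoffel}. The stability hypotheses on the common loci then make both sides well-defined morphisms agreeing on objects. Since all constructions commute with base change, as in the proof of that theorem, the agreement holds for families.

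\emph{Step 4.} For the matrix identity, I would fix frames. Since $B_s$ is the matrix of the rational identity $\widetilde{\mathcal E}_s\dashrightarrow\widetilde{\mathcal E}_{s+1}$ as in \eqref{eq:dynamics-identity-matrix}, and $H^{\mathrm{Ch}}_s$ is the matrix of the rational identity realizing $\mathcal H_s$, the matrix of $\widehat\Phi_s$ in the transformed frames is the composite of rational identities:
\[
\widehat{\mathcal E}_s\dashrightarrow\mathcal E_s\dashrightarrow\mathcal E_{s+1}\dashrightarrow\widehat{\mathcal E}_{s+1}.
\]
This gives $\widehat B_s=H^{\mathrm{Ch}}_{s+1}B_s(H^{\mathrm{Ch}}_s)^{-1}$. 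The compatibility with the connections follows from \eqref{eq:dynamics-forward-connection} together with the transformation rule $A'(z)=R((z+1)^2)A(z)R(z^2)^{-1}$.

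I expect Step~2 at infinity to be the main obstacle. The eigenline $\ell_+$ and the spectrum of $B$ must be shown to be unaffected by $\Phi_s$. The point addition preserves the lattice at infinity, but the frame used for the normalization at infinity could change. I would argue intrinsically: $\ell_+$ is the image of the spectral projector of the leading transport data at infinity, and that projector depends only on the germ of $(\mathcal E,\nabla)$ there. The statement for $\mathcal C$ then follows because scalar multiplication by $z/(z+1)$ commutes with every lattice change.
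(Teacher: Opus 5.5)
Your proposal is correct and follows the paper's own argument: the Christoffel modification changes lattices only over $y=0,\infty$ and point addition only over $y=(s+1)^2$, so both orders produce the same lattices and transport, and the factor $z/(z+1)$ is a unit at the points involved in point addition, which handles $\mathcal C$. One small precision: $\ell_0=\ker\nabla_0$ is the kernel of a fibre map $\mathcal E_0\to\mathcal E_1$, so it also depends on the lattice over $y=1$; your disjointness check must therefore include $(s+1)^2\neq 1$, which the paper obtains from $s\ge k>1$ and which also follows from your distinctness hypotheses.
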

\begin{proof}
The Christoffel modification changes lattices only at $0,\infty$; point
addition changes the lattice at $(s+1)^2$. Since $s\ge k>1$, point addition
also fixes the lattices defining the flags at $0,1,\infty$.
Thus both orders give the same local lattices and transport.
The scalar $z/(z+1)$ is independent of $s$ and is a unit at the
points involved in point addition, so the same argument applies to $\mathcal C$.
\end{proof}

Composing any source chart with $\mathcal H_s^{-1}$ gives a target chart.
The rational normal forms below give explicit charts
for both moduli problems and their common domain of equivalence.

For the bundles associated with the weights, the Pearson identity
$(D_w)_{22}(s)w(s)/s=w(s+1)/(s+1)$ cancels the pole at $z=s$ of
\[
 C_{s+1}[w]((z+1)^2)^{-1}D_w(z)C_s[w](z^2).
\]
Its determinant is a unit there, so $\Phi_s$ adds the squared node
$(s+1)^2$ of mass $2w(s+1)$. The same holds for $\widehat w$.
In the frames $C_s[v]$, the compatibility relation is
\begin{equation}\label{eq:positive-weight-christoffel-square}
 \begin{gathered}
 B_s^{\mathrm{res}}[v]=C_{s+1}[v]^{-1}C_s[v]
 =\begin{pmatrix}1&0\\-2v(s+1)/(y-(s+1)^2)&1\end{pmatrix},\\
 H_{s+1}^{\mathrm{res}}B_s^{\mathrm{res}}[w]
 =B_s^{\mathrm{res}}[\widehat w]H_s^{\mathrm{res}}.
 \end{gathered}
\end{equation}
Lemma~\ref{lem:christoffel-cohomology} gives its scalar normalization:
\begin{equation}\label{eq:christoffel-determinant-square}
 \frac{\varepsilon_H(s+1)}{\varepsilon_H(s)}
 =\frac{\Delta_k[\widehat w,s+1]\Delta_k[w,s]}
 {\Delta_k[\widehat w,s]\Delta_k[w,s+1]}.
\end{equation}
The supports remain disjoint at $s=k$ and $s=N-2$.
The last step ends at index $N-1$, so no connection at $N$ is needed.

\subsection{Spectral coordinates}
\label{subsec:christoffel-rational-matrix}

Put $u=z(z+1)$ and $\psi(z)=z+\tfrac12$. Define $L,M$, of degree at most
one, by the unique decomposition below, and set
\begin{equation}\label{eq:standard-paired-family}
 \begin{gathered}
 (z+1)d(z)=\psi L(u)+M(u),\qquad D_0(u)=d(z)d(-z-1),\\
 c_s(z)=(z+1)d(z)[u-s(s+1)],\qquad
 \mathscr D(u)=uD_0(u)[u-s(s+1)]^2.
 \end{gathered}
\end{equation}
\begin{lemma}[Underlying bundle of a stable connection]
\label{lem:stable-symmetric-splitting}
Every connection in $\mathcal M_s$ or $\widehat{\mathcal M}_s$
has underlying bundle isomorphic to
$\mathcal O\oplus\mathcal O(-1)$.
\end{lemma}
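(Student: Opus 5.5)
The plan is to combine the Birkhoff--Grothendieck theorem with stability and a bound coming from the transport. Since $\deg\mathcal E=-1$, Birkhoff--Grothendieck (as in Corollary~\ref{cor:rank-two-corank-splitting}) gives $\mathcal E\simeq\mathcal O(a)\oplus\mathcal O(-1-a)$ with $a\ge0$. It therefore suffices to rule out $a\ge1$. I will do this by showing that for $a\ge1$ the summand $L=\mathcal O(a)$ is an invariant line subbundle. Since $\deg L=a\ge1>-1$, this contradicts Definition~\ref{def:stable-symmetric-connection}.

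\emph{Step 1: $\mathcal O(a)$ is canonical.} For $a\ge1$ the subbundle $\mathcal O(a)$ is the image of the evaluation map $H^0(\mathcal E(-a))\otimes\mathcal O(a)\to\mathcal E$. Hence it is the unique line subbundle of maximal degree.

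\emph{Step 2: bound the distortion of the transport.} Let $A(z)$ be the transport matrix in a splitting frame. I will consider the component
\[
 \theta(z)=\pi_2\circ\nabla_z\big|_{L_{z^2}}:\ L_{z^2}\dashrightarrow\mathcal O(-1-a)_{(z+1)^2}.
\]
I regard it as a rational section of $\mathcal{H}om(\sigma_1^*L,\sigma_2^*\mathcal O(-1-a))$ on $\mathbb P^1_z$, where $\sigma_1(z)=z^2$ and $\sigma_2(z)=(z+1)^2$. This line bundle has degree $2(-1-a)-2a=-2-4a$. The section $\theta$ may have poles only at the prescribed singular points:
\begin{itemize}
 \item $0$ and $-1$, each contributing at most one pole order (types $(0,1)$ and $(-1,0)$, or their modified versions);
 \item the roots of $d$ and their reflections, each contributing at most one;
 \item $s,-s-1$, each contributing at most one (type $(-1,1)$);
 \item infinity, with a pole order fixed by the normalization \eqref{eq:christoffel-infinity-data} or $A_\infty$.
\end{itemize}
For $\widehat{\mathcal M}_s$ the local data \eqref{eq:alt-christoffel-data} shift these counts only at $0,-1,\infty$. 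The total number of zeros of $\theta$ is $\deg(\text{bundle})+\#\text{poles}$. The pole count is a fixed number $P$, roughly $2m+4$ plus the contribution at infinity. So if $4a+2>P$, then $\theta\equiv0$, which means $L$ is invariant.

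This alone only bounds $a$ by a constant, not by $0$. That is the main obstacle. To close the gap I plan to exploit the symmetry $\nabla_{\iota(z)}\nabla_z=\mathrm{id}$ together with the ramified structure at $y=0,\infty$. The component $\theta$ satisfies a reflection relation relating $\theta(\iota z)$ to the $(1,2)$ entry of $A(z)$. This lets me descend a product, such as $\theta(z)\theta(\iota z)$ or a suitable determinant-normalized expression, to a rational section of a bundle on $\mathbb P^1_u$ with $u=z(z+1)$. On that curve the degree count halves the pole contributions but keeps the degree $-2-4a$ in the relevant sense. I expect this to force vanishing already for $a\ge1$.

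If the count does not close, the fallback is more local. At infinity the leading data of the transport are diagonalizable with distinct labelled eigenvalues. For $a\ge1$ the off-diagonal component, in a holomorphic frame at infinity adapted to the splitting, must vanish to higher order than the normalization \eqref{eq:christoffel-infinity-data} allows unless $L_\infty$ is an eigenline. I would then combine this with the finite pole count, as in the stability verification for the weight connections. In either route I conclude that $L$ is invariant, contradicting stability. Hence $a=0$, and the argument applies identically to $\mathcal M_s$ and $\widehat{\mathcal M}_s$.
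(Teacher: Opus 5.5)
\textbf{There is a genuine gap: as submitted, the proposal does not exclude $a=1$.} Your strategy is the paper's: split $\mathcal E\simeq\mathcal O(a)\oplus\mathcal O(-1-a)$, show by a degree count that the component $\theta=A_{21}$ of the transport from $\mathcal O(a)$ to $\mathcal O(-1-a)$ vanishes for $a\ge1$, and contradict stability. But you concede that your count only bounds $a$ by a constant. Neither remedy closes this: the descent of $\theta(z)\theta(\iota z)$ to $\mathbb P^1_u$ and the eigenline fallback at infinity are sketched without any computation (``I expect'').

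\textbf{The obstruction comes from a miscount of poles, not from a weakness of the method.}
\begin{itemize}
\item \emph{Finite poles.} At a point $z_0$ of type $(0,1)$, the matrix is $\operatorname{diag}(1,z-z_0)$ in regular frames, so the transport is holomorphic there. Hence for $\mathcal M_s$ there is no pole at $0$ and none at the reflected roots $-r-1$ of $d$. The poles of $\theta$ lie only at $-1$, at the $m$ roots of $d$, and at $s,-s-1$, all simple. So the finite polar divisor has degree at most $m+3=\deg c_s$, not roughly $2m+4$.
\item \emph{Infinity.} There is no pole at infinity either. A holomorphic gauge $R(y)=R_0+O(y^{-1})$ at $y=\infty$ makes both $R(z^2)$ and $R((z+1)^2)$ equal to $R_0+O(z^{-2})$. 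So the normalization $A=I+B/z+O(z^{-2})$ for $m=2$ (resp.\ $A=A_\infty+O(z^{-1})$ for $m=1$) persists in the frame $(y^ae_1,y^{-a-1}e_2)$ adapted to the splitting. Thus $\theta$ has a zero of order at least $m-1$ at infinity.
\item \emph{Conclusion of the count.} If $\theta\not\equiv0$, then $-2-4a=\deg\operatorname{div}\theta\ge(m-1)-(m+3)=-4$. This gives $a\le\tfrac12$, hence $a=0$.
\end{itemize}
This is exactly the paper's estimate $P_{21}=c_s\,z^{-2a}(z+1)^{-2a-2}A^{(\infty)}_{21}=O(z^{2-4a})$.

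For $\widehat{\mathcal M}_s$, the simple pole moves from $-1$ to $0$, and the behaviour at infinity is still $I+\widehat B/z+O(z^{-2})$, so the count is unchanged. The paper instead multiplies by $z/(z+1)$ to restore the source finite types. With this corrected count, your Step~2 alone proves the lemma; Step~1, the symmetry descent and the fallback are unnecessary.
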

\begin{proof}
For a connection in $\widehat{\mathcal M}_s$, multiply its matrix
by $z/(z+1)$ as in \eqref{eq:christoffel-original-moduli}.
This restores the source finite types without changing the bundle,
its invariant line subbundles, or its leading behaviour at infinity.
By the Birkhoff--Grothendieck theorem, write
$\mathcal E\simeq\mathcal O(a)\oplus\mathcal O(-a-1)$, $a\ge0$.
In an affine frame compatible with this splitting,
the prescribed finite types make $P=c_sA$ polynomial.
In the associated holomorphic frame at infinity,
\[
 A^{(\infty)}_{21}(z)
 =z^{2a}(z+1)^{2a+2}A_{21}(z).
\]
For $m=2$, normalization at infinity gives
$A^{(\infty)}_{21}=O(z^{-1})$; for $m=1$, boundedness gives
$A^{(\infty)}_{21}=O(1)$.
Since $\deg c_s=m+3$, in both cases
\[
 P_{21}(z)
 =c_s(z)z^{-2a}(z+1)^{-2a-2}A^{(\infty)}_{21}(z)
 =O(z^{2-4a}).
\]
If $a\ge1$, this polynomial vanishes identically.
The direct summand $\mathcal O(a)$ is then generically invariant,
contrary to stability. Hence $a=0$.
\end{proof}
Use scalars $f,p,b$. The data at infinity fix $h$, with $h(f)=b$, and
the eigenline $\ell_+=\langle(q_\infty,1)^T\rangle$ by
\begin{equation}\label{eq:quadratic-plus-trace}
 \begin{gathered}
 \begin{array}{c|cc}
 m&h(u)&q_\infty\\ \hline
 2&b+(u-f)(u+f+\alpha_s)&\gamma-1\\
 1&b+\gamma(u-f)&1
 \end{array}\\
 \alpha_s=\frac{\gamma^2+[u]D_0-2s(s+1)-5/4}{2}\qquad(m=2).
 \end{gathered}
\end{equation}
On $\mathcal O\oplus\mathcal O(-1)$ the common polynomial representative is
\begin{equation}\label{eq:christoffel-input-slice}
 P=c_sA=
 \begin{pmatrix}\psi h-p&b_{12}\\
 u-f&\psi h+p\end{pmatrix},\qquad
 b_{12}=\frac{(u+\tfrac14)h^2-p^2-\mathscr D(u)}{u-f}.
\end{equation}
The condition that $b_{12}$ be polynomial is the conic equation
\begin{equation}\label{eq:quadratic-spectral-conic}
 p^2=(f+\tfrac14)b^2-\mathscr D(f).
\end{equation}

\begin{lemma}[Rational parametrization]\label{lem:quadratic-spectral-chart}
For the source and scalar-normalized target moduli problems,
the displayed normal form is unique up to scalars
when the coefficient of $u$ in the initial $P_{21}$ is nonzero.
The root $f$ of $P_{21}$ is the first coordinate. Put $r=f-s(s+1)$.
The second coordinate and reconstruction formulas are
\begin{equation}\label{eq:standard-painleve-coordinates}
 g=\frac{p-rM(f)}{b-rL(f)},
\end{equation}
\begin{equation}\label{eq:standard-painleve-inverse}
 \begin{aligned}
 b&=r\frac{L(f)(g^2+f+\tfrac14)-2M(f)g}{g^2-f-\tfrac14},\\
 p&=r\frac{2(f+\tfrac14)L(f)g-M(f)(g^2+f+\tfrac14)}{g^2-f-\tfrac14}.
 \end{aligned}
\end{equation}
This chart requires $f\ne s(s+1)$, $b\ne rL(f)$ and $g^2\ne f+1/4$
in addition to the prescribed local types.
\end{lemma}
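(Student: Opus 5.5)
The plan is to prove uniqueness of the normal form first, then read off the conic, and finally parametrize the conic by projecting from one explicit rational point.

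\emph{Step 1: uniqueness of the normal form.} By Lemma~\ref{lem:stable-symmetric-splitting} the bundle is $\mathcal O\oplus\mathcal O(-1)$. Its automorphisms are generated by nonzero diagonal scalars and by the unipotent maps $\bigl(\begin{smallmatrix}1&c(y)\\0&1\end{smallmatrix}\bigr)$, where $c$ is a section of $\mathcal O(1)$, i.e.\ $\deg c\le1$ in $y=u+\ldots$. Since the matrix is a function of $u$, I will work with $u$ throughout. The gauge rule $A'(z)=R((z+1)^2)A(z)R(z^2)^{-1}$ applies to $P=c_sA$.
\begin{itemize}
\item The entry $P_{21}$ transforms only by a scalar. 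It has degree at most one in $u$ by the degree count in the proof of Lemma~\ref{lem:stable-symmetric-splitting} with $a=0$. If its $u$-coefficient is nonzero, the diagonal scalars normalize it to $u-f$, which defines $f$ intrinsically.
\item The unipotent freedom $c$ then shifts the diagonal entries by multiples of $P_{21}$. I will use it to put the even part of $P_{11}+P_{22}$ in the form $\psi h$.
\item The data at infinity fix $h$ up to its value $b=h(f)$. This comes from \eqref{eq:christoffel-infinity-data}, the eigenline $\ell_+$ and the expansion coefficients listed in \eqref{eq:quadratic-plus-trace}; the constant $\alpha_s$ arises from matching the $z^{-2}$ term of $\det$ in the quadratic case.
\item The odd part under $\iota$ is then $\mp p$, with $p$ constant because the symmetry $A\circ\iota=A^{-1}$ forces the antisymmetric combination to be constant in $u$.
\end{itemize}
The remaining stabilizer is only scalars, which gives uniqueness.

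\emph{Step 2: the conic.} The determinant condition $\det A=z\varrho(z)/(z+1)$ becomes $\det P=\mathscr D(u)$, using $\psi^2=u+\tfrac14$. This forces $b_{12}$ to be the displayed quotient. Requiring that quotient to be polynomial, i.e.\ evaluating the numerator at $u=f$, gives \eqref{eq:quadratic-spectral-conic}.

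\emph{Step 3: a rational point on the conic.} The key observation is that $(b,p)=(rL(f),rM(f))$ lies on the conic. Under $\iota$ we have $\psi\mapsto-\psi$ while $u$ is fixed, so
\[
\psi L(u)-M(u)=-\bigl(\iota(z)+1\bigr)d(\iota(z))=z\,d(-z-1).
\]
Hence
\[
(u+\tfrac14)L^2-M^2=(z+1)d(z)\cdot z\,d(-z-1)=uD_0(u).
\]
Multiplying by $r^2$ and setting $u=f$ gives exactly $\mathscr D(f)$, which confirms the point.

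\emph{Step 4: stereographic projection.} Projecting from this point, the slope $g$ is defined by \eqref{eq:standard-painleve-coordinates}, provided $b\ne rL(f)$. Substituting $p=rM+g(b-rL)$ into the conic and dividing out the known root $b=rL$ gives a linear equation for $b$ with leading coefficient $g^2-f-\tfrac14$. Solving it gives \eqref{eq:standard-painleve-inverse}, and back-substitution gives $p$. Two further exclusions are needed:
\begin{itemize}
\item $g^2\ne f+\tfrac14$, because lines with these slopes are parallel to the asymptotic directions of the conic;
\item $f\ne s(s+1)$, because when $r=0$ the base point degenerates to the singular point $(0,0)$ of the degenerate conic.
\end{itemize}
The chart is the same for the target problem after the scalar normalization \eqref{eq:christoffel-original-moduli}, since that normalization restores the source determinant and finite types.

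I expect Step 1 to be the main obstacle, in particular showing that the conditions at infinity determine $h$ except for the single parameter $b$, uniformly for $m=1,2$. Steps 2–4 are algebraic verification.
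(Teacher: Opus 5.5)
\textbf{Steps 2--4 are fine.} They match the paper: the determinant identity gives $b_{12}$ and the conic, the reflected decomposition $\psi L-M=z\,d(-z-1)$ gives $(f+\tfrac14)L(f)^2-M(f)^2=fD_0(f)$ and hence the rational point $r(L(f),M(f))$, and projecting from that point yields \eqref{eq:standard-painleve-inverse} together with the three exclusions.

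\textbf{The gap is in Step 1, where you interchange the roles of the symmetry and the gauge.} From $P\circ\iota=-\operatorname{adj}P$ you get $P_{21}\circ\iota=P_{21}$ and $P_{11}\circ\iota=-P_{22}$. So $\operatorname{tr}P$ is $\iota$-odd and is \emph{automatically} of the form $2\psi h(u)$; no gauge is needed for it. On the other hand, $P_{11}-P_{22}$ is $\iota$-even. That makes it a polynomial in $u$, of degree at most two by the bounds at infinity, but not a constant. Your claim that the symmetry forces $p$ to be constant is therefore false, and you spend the unipotent freedom on a condition that already holds. Two smaller points:
\begin{itemize}
\item $P$ itself is not a function of $u$; only $P_{21}$, $P_{12}$ and $P_{11}-P_{22}$ are.
\item $P_{21}$ being affine in $u$ needs both $\deg_zP_{21}\le2$ and the invariance $P_{21}\circ\iota=P_{21}$.
\end{itemize}

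\textbf{Consequences and repair.} As written, nothing produces the constant $p$. Moreover, the stabilizer of your normal form still contains the two-dimensional unipotent group, so uniqueness up to scalars does not follow. The repair is the paper's argument. For $c(y)=c_1y+c_0$, the gauge $\bigl(\begin{smallmatrix}1&c\\0&1\end{smallmatrix}\bigr)$ fixes $P_{21}$ and acts by $P_{11}\mapsto P_{11}+c((z+1)^2)P_{21}$ and $P_{22}\mapsto P_{22}-c(z^2)P_{21}$. Hence $(P_{11}-P_{22})/2$ shifts by $\bigl(c_1(u+\tfrac12)+c_0\bigr)(u-f)$, an arbitrary affine multiple of $u-f$. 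This reduces it exactly to its remainder $-p$ modulo $u-f$. The stabilizer is then scalars: any $c\ne0$ destroys the constancy, and $\operatorname{diag}(\alpha,\beta)$ rescales $P_{21}$ by $\beta/\alpha$, so monicity forces $\alpha=\beta$.

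\textbf{Ordering.} The same gauge sends $h\mapsto h+c_1(u-f)$, so only $b=h(f)$ is gauge-invariant. The data at infinity can pin the remaining coefficients of $h$ ($\alpha_s$, respectively $\gamma$) only \emph{after} the $p$-normalization has fixed the unipotent gauge. Your third bullet must therefore come after that step, not before it.

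\textbf{Missing converse.} The paper also checks that the formulas have the prescribed determinant, reflection identity and data at infinity. It further notes that the rank conditions at finite points are precisely the remaining open local-type conditions. Your plan omits this converse.
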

\begin{proof}
By Lemma~\ref{lem:stable-symmetric-splitting}, the underlying bundle
is $\mathcal O\oplus\mathcal O(-1)$. Its automorphisms have constant
diagonal entries and an affine upper-right entry. The identity
$P\circ\iota=-\operatorname{adj}P$ and the bound at infinity
$\deg_z P_{21}\le2$ make $P_{21}$ affine in $u$.
Normalize it to $u-f$. The two upper-right gauge coefficients reduce
the polynomial $(P_{11}-P_{22})/2$, of degree at most two in $u$,
to its remainder $-p$ modulo $u-f$.
The trace is $2\psi h(u)$ with $\deg h\le m$,
and the determinant gives \eqref{eq:christoffel-input-slice}.
The prescribed data at infinity fix $h$ and $q_\infty$ as in
\eqref{eq:quadratic-plus-trace}. Indeed, in the holomorphic frame at infinity,
\[
 A=
 \begin{cases}
 I+\dfrac1z
 \begin{pmatrix}\tfrac12-\nu&\gamma^2-1\\1&\tfrac52-\nu\end{pmatrix}
 +O(z^{-2}),&m=2,\\[6pt]
 \begin{pmatrix}0&1\\1&0\end{pmatrix}
 +\dfrac1z\begin{pmatrix}\gamma&1-\nu\\1-\nu&\gamma\end{pmatrix}
 +O(z^{-2}),&m=1.
 \end{cases}
\]
The conic removes the pole at $u=f$.
Conversely, these formulas have the prescribed determinant and satisfy
the reflection identity and conditions at infinity. The indicated ranks
at finite points impose precisely the
remaining open local-type conditions. The normalized gauge is unique
up to scalars. Restricting to stable connections constructs the target
chart independently.
Reflecting the decomposition in \eqref{eq:standard-paired-family} gives
\[
 (f+\tfrac14)L(f)^2-M(f)^2=fD_0(f)
\]
and hence a rational point $(p,b)=r(M(f),L(f))$ on
$p^2-(f+\tfrac14)b^2=-fr^2D_0(f)$.
Intersecting the conic with a line of slope $g$ through this point gives
\eqref{eq:standard-painleve-inverse}, proving rationality.
\end{proof}

\begin{prop}[Christoffel transformation in spectral coordinates]
\label{prop:christoffel-coordinate-form}
On the common chart with $f\ne0$, form
\begin{equation}\label{eq:spectral-christoffel-frame}
 \lambda_0=-\frac{P_{22}(0)}{P_{21}(0)},\qquad
 M_0(y)=\begin{pmatrix}q_\infty+\lambda_0/y&-1\\1/y&0\end{pmatrix},\qquad
 \widetilde P=\frac z{z+1}M_0((z+1)^2)^{-1}PM_0(z^2).
\end{equation}
Define the following polynomials in $u$:
\begin{equation}\label{eq:christoffel-coefficient-update}
 C=\widetilde P_{21},\qquad
 J=\frac{\widetilde P_{11}-\widetilde P_{22}}2,\qquad
 \widetilde h=\frac{\operatorname{tr}\widetilde P}{2\psi},\qquad
 \kappa=[u]C.
\end{equation}
If $\kappa\ne0$, the target data are
\begin{equation}\label{eq:christoffel-coordinate-extraction}
 \widehat f=-C(0)/\kappa,\qquad
 \widehat p=-J(\widehat f),\qquad
 \widehat b=\widetilde h(\widehat f),\qquad \widehat\gamma=\gamma-1.
\end{equation}
Equation~\eqref{eq:standard-painleve-coordinates} gives $\widehat g$.
The resulting map is birational on the common charts.
\end{prop}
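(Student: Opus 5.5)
The strategy is to show that $M_0$ is the matrix of the scalar-normalized modification $\mathcal C=\frac z{z+1}\mathcal H$ of Theorem~\ref{thm:alt-canonical-christoffel}, written in the polynomial frame of \eqref{eq:christoffel-input-slice}. Then $\widetilde P$ is the transformed polynomial representative. The target coordinates follow by bringing $\widetilde P$ into the normal form of Lemma~\ref{lem:quadratic-spectral-chart} and reading off $f,p,b$.

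\textbf{Step 1: $M_0$ realizes the modification.} We verify the conditions at zero and at infinity separately.

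At zero, $P_{21}(0)=-f\ne0$, so $\lambda_0$ is defined. The residue of the transport at $z=0$ has kernel $\ell_0$. Since $P$ has type $(0,1)$ there, $\ell_0=\ker P(0)$ is spanned by $(\lambda_0,1)^T$ up to normalization, because $P_{21}(0)\lambda_0+P_{22}(0)=0$.

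The columns of $M_0(y)$ are $(q_\infty+\lambda_0/y,\,1/y)^T$ and $(-1,0)^T$.
\begin{itemize}
\item The first column has a simple pole at $y=0$ with residue $(\lambda_0,1)^T\in\ell_0$. This is the upper modification at zero along $\ell_0$.
\item At infinity the first column tends to $(q_\infty,1)^T$, which spans $\ell_+$ by \eqref{eq:quadratic-plus-trace}. The second column $(-1,0)^T$ is transverse to $\ell_+$.
\item Since $\det M_0=1/y$, the frame at infinity is rescaled exactly along a complement of $\ell_+$. This is the lower modification $\ker(\mathcal U\to Q_\infty)$ of \eqref{eq:intrinsic-christoffel-sequences}.
\end{itemize}
So $M_0$ gives the intrinsic diagram of Lemma~\ref{lem:common-upper-modification}. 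The conjugation formula $A'(z)=R((z+1)^2)A(z)R(z^2)^{-1}$, with $R=M_0^{-1}$, then shows that $\widetilde P$ represents $c_s\,\mathcal C(A)$.

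\textbf{Step 2: $\widetilde P$ has the required polynomial form.}
\begin{itemize}
\item Because $\det M_0((z+1)^2)^{-1}\det M_0(z^2)=(z+1)^2/z^2$, we get $\det\widetilde P=\det P$. This matches the determinant normalization of \eqref{eq:christoffel-original-moduli}.
\item The poles at $z=0$ and $z=-1$ cancel by the type computation in the proof of Theorem~\ref{thm:alt-canonical-christoffel}, so $\widetilde P$ is polynomial.
\item The reflection identity $\widetilde P\circ\iota=-\operatorname{adj}\widetilde P$ is inherited from $P$, since $M_0$ depends only on $y$.
\item By Lemma~\ref{lem:stable-symmetric-splitting}, the target bundle is again $\mathcal O\oplus\mathcal O(-1)$. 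Hence, as in the proof of Lemma~\ref{lem:quadratic-spectral-chart}, $C$ is affine in $u$, $J$ has degree at most two in $u$, and $\widetilde h$ is a polynomial of degree at most $m$.
\end{itemize}

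\textbf{Step 3: read off the target coordinates.} Assume $\kappa\ne0$. Divide by the constant $\kappa$ using the diagonal gauge, so that $C/\kappa=u-\widehat f$.

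The upper-right gauge entry, affine in $u$, replaces $J$ by its remainder modulo $u-\widehat f$. That remainder is $J(\widehat f)$, which gives $\widehat p$ up to the sign convention of \eqref{eq:christoffel-input-slice}. The trace is gauge-invariant, so $\widehat b=\widetilde h(\widehat f)$. The value $\widehat\gamma=\gamma-1$ comes directly from \eqref{eq:christoffel-original-moduli}.

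\textbf{Step 4: birationality.} The inverse map is built the same way from the inverse modification of Theorem~\ref{thm:alt-canonical-christoffel}: a lower modification at zero along the kernel of the residue, and an upper modification at infinity along the $\widehat d_-$ eigenline. This uses an analogous explicit matrix with entries rational in the target coordinates. Each construction is given by rational formulas in $(f,g)$, and the two are mutually inverse on the open locus where all the denominators $f$, $\kappa$, and the chart conditions of Lemma~\ref{lem:quadratic-spectral-chart} are nonzero. By Theorem~\ref{thm:alt-canonical-christoffel} this yields the birational map.

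\textbf{Main obstacle.} The hardest point is the diagonal scaling hidden in Step 3. The leading diagonal scaling of $M_0$ at infinity must produce exactly the normalization $h$ of \eqref{eq:quadratic-plus-trace} with $\gamma$ replaced by $\widehat\gamma$. This matters for $m=2$, where $\alpha_s$ depends on $\gamma$. My first attempt would be to compare the $z^{-1}$ term of $\widetilde P$ in the eigenframe computed in the proof of Theorem~\ref{thm:alt-canonical-christoffel}.
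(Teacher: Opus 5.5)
Your route is the paper's: show that $M_0$ realizes the upper modification at $0$ along $\ker P(0)$ and the lower one at $\infty$ along $\ell_+$, then gauge $\widetilde P$ into the normal form of Lemma~\ref{lem:quadratic-spectral-chart} by $S(y)=\bigl(\begin{smallmatrix}1&\sigma(y)\\0&\kappa\end{smallmatrix}\bigr)$. Step~3, however, contains a false claim: the trace is \emph{not} gauge-invariant. The gauge acts by $\widetilde P\mapsto S((z+1)^2)^{-1}\widetilde P\,S(z^2)$, which is not a conjugation once $\sigma$ depends on $y$.

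And $\sigma$ must depend on $y$. It is a map $\mathcal O(-1)\to\mathcal O$, so it is affine in $y$, not in $u$. With $\sigma(y)=a_1y+a_0-a_1/2$, it is the average $\tfrac12\bigl(\sigma((z+1)^2)+\sigma(z^2)\bigr)=a_1u+a_0$ that reduces $J=(u-\widehat f)(a_1u+a_0)-\widehat p$ to $-\widehat p$. A direct computation gives
\[
 \operatorname{tr}\bigl(S((z+1)^2)^{-1}\widetilde P\,S(z^2)\bigr)
 =\operatorname{tr}\widetilde P+\bigl(\sigma(z^2)-\sigma((z+1)^2)\bigr)\frac C\kappa
 =\operatorname{tr}\widetilde P-2a_1\psi\,(u-\widehat f).
\]
So $\widetilde h$ is replaced by $\widetilde h-a_1(u-\widehat f)$. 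Here $a_1=[u^2]J$ does not vanish: expanding $\widetilde P$ gives $a_1=q_\infty+1=\gamma$ for $m=2$ and $a_1=q_\infty=1$ for $m=1$. Only the value at $u=\widehat f$ is preserved. That is the correct justification of $\widehat b=\widetilde h(\widehat f)$, and it is exactly the paper's ("at $u=\widehat f$ it leaves $\widetilde h(\widehat f)$ unchanged").

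This also dissolves your ``main obstacle''. You should not expect $\widetilde h$ itself to carry the normalization \eqref{eq:quadratic-plus-trace} with $\widehat\gamma$. For $m=2$ its $u$-coefficient is $\alpha_s+\tfrac12$, which exceeds $\widehat\alpha_s=\alpha_s-\gamma+\tfrac12$ by exactly $a_1=\gamma$. The normalized polynomial is the trace of the gauge-fixed matrix, and no separate expansion at infinity is needed. Your Step~1 already shows that $\widetilde P$ represents $c_s\,\mathcal C(A)$. By Theorem~\ref{thm:alt-canonical-christoffel} and \eqref{eq:christoffel-original-moduli}, this connection has the target data at infinity. The uniqueness of the normal form in Lemma~\ref{lem:quadratic-spectral-chart} then forces the gauge-fixed matrix to be the target normal form, with $h$ determined by $\widehat\gamma$. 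This is what the paper means by ``the data at infinity give $\widehat\gamma$, and the lemma fixes all remaining coefficients''. With these two corrections your argument coincides with the paper's.
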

\begin{proof}
The residue of $M_0$ at zero has image $\ker P(0)$.
In holomorphic frames, its value at infinity has image $\ell_+$.
Thus $M_0$ represents the elementary modifications.
The scalar factor gives exactly $\mathcal C$ in
\eqref{eq:christoffel-original-moduli}.
Direct multiplication gives $\deg(C,J,\widetilde h)\le(1,2,2)$.
Polynomial division gives $J=(u-\widehat f)(a_1u+a_0)-\widehat p$.
The bundle automorphism
\[
 S(y)=\begin{pmatrix}1&a_1y+a_0-a_1/2\\0&\kappa\end{pmatrix}
\]
acts by $\widetilde P\mapsto S((z+1)^2)^{-1}\widetilde P S(z^2)$,
making the lower-left entry monic and half the diagonal difference constant.
At $u=\widehat f$ it leaves $\widetilde h(\widehat f)$ unchanged.
The data at infinity give $\widehat\gamma$,
and Lemma~\ref{lem:quadratic-spectral-chart} fixes all remaining coefficients.
The same construction gives a rational inverse on the common domain.
\end{proof}

The point-addition map $\Phi_s$ is rational in these coordinates. Put $a=s+1$ and
\[
 \begin{gathered}
 T_\infty(y)=\begin{pmatrix}1&q_\infty y\\0&1\end{pmatrix},\qquad
 \Pi(z)=T_\infty((z+1)^2)^{-1}P(z)T_\infty(z^2),\\
 \lambda=\frac{\Pi_{11}(s)}{\Pi_{21}(s)},\qquad
 \theta=-\left.\frac{d}{dz}
                  \frac{\Pi_{11}(z)}{\Pi_{21}(z)}\right|_{z=s}.
 \end{gathered}
\]
On the open set where $\Pi_{21}(s)\theta\ne0$, form
\[
 \begin{gathered}
 K=\frac{2a}{\theta}\binom{\lambda}{1}(-1,\lambda),\qquad
 R(y)=I-\frac{K}{y-a^2},\\
 A^+(z)=T_\infty((z+1)^2)R((z+1)^2)
       \frac{\Pi(z)}{c_s(z)}
       R(z^2)^{-1}T_\infty(z^2)^{-1}.
 \end{gathered}
\]
Put $c_{s+1}A^+$ in the normal form of
Lemma~\ref{lem:quadratic-spectral-chart}; its coordinates are
$\Phi_s(f,g)$ wherever the target chart is defined.
To verify the construction, note that $K^2=0$ and that the double zero of
$\det\Pi$ at $s$ gives
$K\Pi(s)=0$ and $K\Pi'(s)=2a\Pi(s)$.
The first identity follows from rank one. The condition $(\det\Pi)'(s)=0$
and the definition of $\theta$ give
$(-1,\lambda)\Pi'(s)=\theta(\Pi_{21}(s),\Pi_{22}(s))$, which proves the second.
Thus the pole at $s$ cancels, while the new pole lies at $s+1$;
reflection gives the other branch. This realizes
\eqref{eq:intrinsic-image-lattice}. The change is regular and invertible
on the bundle at infinity, so the data there are unchanged.
Use $\mathcal C_s$ also for the Christoffel coordinate map.
Then the transformed dynamics is
\begin{equation}\label{eq:standard-painleve-root-step}
 (\widehat f_{s+1},\widehat g_{s+1})
 =\mathcal C_{s+1}\Phi_s\mathcal C_s^{-1}(\widehat f_s,\widehat g_s).
\end{equation}
These coordinates describe the indicated open subsets of the common stable loci.

\begin{remark}[Degenerations and complex weights]
Constant ratios or coincident singularities require separate strata
of the moduli space with the corresponding local types.
The normalization identities still hold. The local formula for the
recurrence coefficient applies under its stated hypotheses.
All finite-sum identities extend to complex weights wherever the
required moment determinants are nonzero. The moduli results require
the same local types, conditions at infinity and stability.
\end{remark}
\label{sec:christoffel-end}

\section{Christoffel transformation of the gap probability recurrence}
\label{sec:christoffel-gap}
Use the positive weights and moment assumption of Section~4. Put
$\mu_\infty[v]=\sum_{x\in\mathfrak X}v(x)\delta_{x^2}$ and define
$\Delta_k[v,\infty]$ by the same moment formula, for $v=w,\widehat w$.
The following probability comparison requires no Pearson condition.

Assign each $k$-point configuration $I\subset\operatorname{supp}\mu_\infty[v]$
the weight
\[
 \prod_{x<y\in I}(y-x)^2\prod_{y\in I}\mu_\infty[v](\{y\}).
\]
After normalization, the Cauchy--Binet formula gives
\begin{equation}\label{eq:gap-tau-probability}
 \mathbb P_v(\max I\le s^2)=\mathsf E_s[v]
 =\frac{\Delta_k[v,s]}{\Delta_k[v,\infty]}.
\end{equation}
Write $\mathsf E_s=\mathsf E_s[w]$ and
$\widehat{\mathsf E}_s=\mathsf E_s[\widehat w]$.
The Christoffel transformation multiplies each configuration weight by
$\prod_{y\in I}y$. The Christoffel determinant identity of
Lemma~\ref{lem:christoffel-cohomology} gives
\begin{equation}\label{eq:gap-christoffel-scalar}
 \mathbb E_w\!\left[\prod_{y\in I}y\,\middle|\,\max I\le s^2\right]
 =(-1)^k\varepsilon_H(s)>0.
\end{equation}
Indeed, multiplication by this product replaces $\mu_s[w]$ by $y\mu_s[w]$.
The determinants $\Delta_k[w,s]$ and $\Delta_k[\widehat w,s]$ involve moments
of $w$ in $x$ only through order $4k-2$. The moment assumption gives entrywise
convergence of their matrices, and the $k$ distinct positive squared nodes
make both limiting determinants positive. Thus exhaustion of $\mathfrak X$
gives the finite nonzero limit $\varepsilon_H(\infty)$.
Dividing by the unconditional expectation yields
\begin{equation}\label{eq:gap-transform}
 \mathbb P_{\widehat w}(\max I\le s^2)
 =\mathbb P_w(\max I\le s^2)
   \frac{\varepsilon_H(s)}{\varepsilon_H(\infty)}.
\end{equation}
Consequently, a known recurrence
$\mathsf E_{s+2}\mathsf E_s=Q_s\mathsf E_{s+1}^2$ transforms into
\begin{equation}\label{eq:gap-known-recurrence}
 \widehat{\mathsf E}_{s+2}\widehat{\mathsf E}_s
 =\widehat Q_s\widehat{\mathsf E}_{s+1}^{\,2},\qquad
 \widehat Q_s=Q_s
 \frac{\varepsilon_H(s)\varepsilon_H(s+2)}{\varepsilon_H(s+1)^2}.
\end{equation}

\subsection{The recurrence coefficient in a bundle frame}

Now let $\mathfrak X=\mathbb Z$ or $\{-N,\ldots,N\}$, and retain the
positive even weight and moment assumption of Section~4. Assume that
$w(j+1)/w(j)=\varrho(j)$ on the lattice edges, where $\varrho$ is rational,
finite and positive on those edges, and
$\varrho(z)\varrho(-z-1)=1$. Use the transports
\eqref{eq:christoffel-even-connection}, with $k\le s\le N-2$ in the
finite case and $s\ge k$ otherwise. The local formula below requires no
degree bound on $\varrho$ or restriction on the eigenvalues at
infinity. The spectral description in the next subsection retains the
degree and chart assumptions of Section~4.

The degree restriction in that description concerns $w$. Indeed,
\[
 \frac{\widehat w(j+1)}{\widehat w(j)}
 =\frac{(j+1)^2}{j^2}\varrho(j)\qquad(j,j+1\ne0)
\]
need not have numerator and denominator of degree at most two.
The modification provides spectral coordinates for the transformed
connection without this degree bound.

Put $a=s+1$. For $v=w,\widehat w$, let $A_s[v]$ be its transport in
a frame of $\mathcal F_s[v]\simeq\mathcal O(-1)^{\oplus2}$.
Choose a polynomial representative $\Pi=cA_s[v]$, with $c$ having a
simple zero at $s$ and $c(a)\ne0$. The Pearson assumptions give
$\det \Pi(a)\ne0$.
Since $\Pi(s)$ has rank one, a constant change of frame ensures
$\Pi_{21}(s)\ne0$. Define
\begin{equation}\label{eq:gap-local-spectral-data}
 \begin{gathered}
 \lambda=\frac{\Pi_{11}(s)}{\Pi_{21}(s)},\qquad
 \theta=-\left.\frac{d}{dz}\frac{\Pi_{11}(z)}{\Pi_{21}(z)}\right|_{z=s},\\
 Y(z)=\lambda \Pi_{21}(z)+\Pi_{22}(z),\qquad
 F(z)=\lambda Y(z)-\lambda \Pi_{11}(z)-\Pi_{12}(z).
 \end{gathered}
\end{equation}
Primes denote $z$-derivatives. The scalar $\theta$ measures the
first-order change of the line spanned by $\Pi(z)e_1$ at $z=s$.

\begin{prop}[Formula for the recurrence coefficient]\label{prop:gap-general-pearson}
One has $\theta\ne0$, and the recurrence coefficient for either weight is
\begin{equation}\label{eq:gap-general-observable}
 \frac{\mathsf E_{s+2}[v]\mathsf E_s[v]}{\mathsf E_{s+1}[v]^2}
 =1+\frac{Y(a)F'(a)-Y'(a)F(a)}{\theta\det \Pi(a)}
 -\frac{4a(a+1)F(a)^2}{(2a+1)^2\theta^2\det \Pi(a)}.
\end{equation}
The expression is invariant under rescaling of the representative by scalar
functions regular and nonzero at $s,a$, and under constant changes of frame
where defined.
\end{prop}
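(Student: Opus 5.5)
We express the ratio $\mathsf E_{s+2}\mathsf E_s/\mathsf E_{s+1}^2$ as a ratio of consecutive normalization scalars and compute that ratio from the transport matrix. By \eqref{eq:gap-tau-probability},
\[
 \frac{\mathsf E_{s+2}[v]\mathsf E_s[v]}{\mathsf E_{s+1}[v]^2}
 =\frac{\Delta_k[v,s+2]\Delta_k[v,s]}{\Delta_k[v,s+1]^2}
 =\frac{\varepsilon_{s+1}}{\varepsilon_s},
\]
where $\varepsilon_s=\Delta_k[v,s+1]/\Delta_k[v,s]$ is the point-addition scalar of Lemma~\ref{lem:dynamics-point-addition}, applied to the squared measure $\mu_s[v]$. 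By \eqref{eq:dynamics-epsilon-kernel}, it equals $1+2v(a)K_{k-1}^{(s)}(a^2,a^2)$. Lemma~\ref{lem:common-quotient-normalization} identifies $\varepsilon_s$ as $\det T$ for the length-one modification at $y=a^2$. So the task reduces to computing $\varepsilon_{s+1}/\varepsilon_s$ from the local behaviour of $\Pi$ near $z=s$ and $z=a$.

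\textbf{Step 1: the rational update at $a$.} I would use the rational update \eqref{eq:dynamics-rational-update}, transported to the symmetric setting exactly as in the construction of $A^+$ in Section~\ref{subsec:christoffel-rational-matrix}. The double zero of $\det\Pi$ at $s$ (types $(-1,1)$ at $s,-s-1$) makes $\Pi(s)$ rank one, which gives $\theta\ne0$. If $\theta=0$, the kernel line would be stationary to first order and $\det\Pi$ would vanish to order higher than allowed by the type. Equivalently, $\theta=0$ would force $\varepsilon_s=0$, contradicting positivity of $\Delta_k$.

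The nilpotent $K=\frac{2a}{\theta}\binom{\lambda}{1}(-1,\lambda)$ then plays the role of $N_s$. Comparing the two expressions for the new residue matrix, namely the one from \eqref{eq:dynamics-first-lax} and the one from $K$, gives $\varepsilon_s$ in terms of $\theta$ and the moment data.

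\textbf{Step 2: the next step at $a$.} The analogous quantity $\varepsilon_{s+1}$ is read off from the updated matrix $\Pi^+=R((z+1)^2)\Pi R(z^2)^{-1}$ near $z=a$. Here $R(y)=I-K/(y-a^2)$, so $R((z+1)^2)$ has its pole at $z=s$ (cancelled) and $R(z^2)^{-1}$ has its pole at $z=a$. Expanding $R(z^2)^{-1}=I+K/((z-a)(z+a))$ produces the factor $1/(2a)$. The reflected branch $z\mapsto -z-1$ and $y=z^2$ produce the $(2a+1)$ and $a(a+1)$ factors, since $(z+1)^2-z^2=2z+1$ and $u=z(z+1)$.

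I would compute the rank-one data $(\lambda^+,\theta^+)$ of $\Pi^+$ at $z=a$ to first order. Contracting with $(-1,\lambda)$ on the left and $\binom{\lambda}{1}$ on the right converts entries of $\Pi(a)$ into $Y$ and $F$. The ratio $\theta^+/\theta$, divided by the appropriate $\det\Pi(a)$, should produce exactly the displayed expression. The term linear in $1/\theta$ gives the Wronskian term $(YF'-Y'F)(a)$, and the term quadratic in $1/\theta$ comes from $K\,\cdot\,K^T$-type cross terms, giving $F(a)^2$.

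\textbf{Step 3: invariance.} I would check invariance separately.
\begin{itemize}
\item Under $\Pi\mapsto g\Pi$ with $g$ regular and nonzero at $s,a$: $\lambda$ is unchanged, while $\theta$, $Y$, $F$ scale by $g(s)$, $g$, $g$ respectively, up to additive multiples of the defining rank-one relations. Both terms then scale homogeneously against $\det\Pi(a)\mapsto g(a)^2\det\Pi(a)$. The $g'$ contributions cancel in the Wronskian.
\item Under constant gauge $\Pi\mapsto S^{-1}\Pi S$, the row vector $(-1,\lambda)$ and column vector $\binom{\lambda}{1}$ transform covariantly, and the expression is built from invariant pairings.
\end{itemize}

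\textbf{Main obstacle.} The hardest step is the identification in Steps 1--2 of $\varepsilon_{s+1}/\varepsilon_s$ with the local quantity. I would first try to bypass the kernels entirely by using Lemma~\ref{lem:common-quotient-normalization} twice. Point addition and the transport form a Lax pair, as in \eqref{eq:dynamics-compatibility}. Transporting the section $h_s$ by $\nabla_s$ identifies the quotient at $a^2$ for step $s+1$ with the pushed-forward quotient from step $s$. Then $\varepsilon_{s+1}/\varepsilon_s$ becomes $\ell_{s+1}(\nabla h)/\ell_s(h)$, evaluated by the first-order Taylor expansion of $\Pi$ at $a$ against the line $(\lambda,1)^T$.

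The delicate bookkeeping is the normalization factors coming from $dy=2z\,dz$ and the two branches. These must be tracked carefully to land on the coefficient $4a(a+1)/(2a+1)^2$.
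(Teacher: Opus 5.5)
Your reduction of the ratio to $\varepsilon_{s+1}/\varepsilon_s$ is the right start. After that there is a genuine gap, in two places.

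\textbf{The argument for $\theta\ne0$.} Your local-type argument is wrong. The type at $s$ only says that $\Pi(s)$ has rank one and that $\det\Pi$ has a double zero. It does not constrain the first-order motion of the line $\Pi(z)e_1$. For example, $\Pi=\bigl(\begin{smallmatrix}1&0\\1-(z-s)^2&(z-s)^2\end{smallmatrix}\bigr)$ has exactly this type and $\Pi_{21}(s)=1$, yet $\theta=0$. This is why Section~4 has to restrict to the open set $\Pi_{21}(s)\theta\ne0$.

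Nonvanishing of $\theta$ needs global input. Let $V_r$ be the transposed normalized DRHP frames for $\mu_r[v]$. Point addition gives $V_{s+1}(y)=V_s(y)(I+K/(y-a^2))$ with $K=(\mu_a/\varepsilon_s)\mathbf u\ell$, where $\mu_a=2v(a)$, $\mathbf u=V_s(a^2)^{-1}e_2$ and $\ell=e_1^TV_s(a^2)$. Since $\varepsilon_s>0$, $K\ne0$. The Pearson identity on the edge $s\to a$ makes $R((z+1)^2)\Pi(z)=O(z-s)$ for $R(y)=I-K/(y-a^2)$, i.e.\ $K\Pi(s)=0$ and $K\Pi'(s)=2a\Pi(s)$. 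Apply these to $q=\Pi e_1$, noting $q(s)\ne0$: they force $q(s),q'(s)$ to be independent, which is $\theta\ne0$. The same identities identify this $K$ with $\frac{2a}{\theta}e(-1,\lambda)$. You instead import $K$ from the construction of $A^+$, where it was only defined under the hypothesis $\theta\ne0$. Your ``equivalently, $\theta=0$ forces $\varepsilon_s=0$'' is the right conclusion, but this argument is exactly what has to supply it.

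\textbf{The link between $\varepsilon_{s+1}$ and $\varepsilon_s$.} This is the more serious gap: nothing in Steps~1--2 connects them, and you flag it yourself as the main obstacle. Step~1 can only express $\varepsilon_s$ through $\theta$ and $V_s(a^2)$. That is useless until $V_{s+1}((a+1)^2)$ is related to $V_s(a^2)$. Your heuristic $\ell_{s+1}(\nabla h_s)/\ell_s(h_s)$ fails because the transported section is not a global section of the next modification, so it does not compute $\varepsilon_{s+1}$.

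The missing ingredient is the local identity near $\beta=(a+1)^2$, with branch $\sqrt\beta=a+1$ and $A=\Pi/c$:
\[
 V_{s+1}(y)=D_v(\sqrt y-1)\,V_s\bigl((\sqrt y-1)^2\bigr)\,B(y),\qquad B(y)=A(\sqrt y-1)^{-1}R(y)^{-1}.
\]
It is used together with $e_1^TD_v=e_1^T$ and the Pearson relation $\mu_{a+1}/(D_v)_{22}(a)=\mu_a(a+1)/a$ on the edge $a\to a+1$. The paper uses these to get $\varepsilon_{s+1}=\varepsilon_s-\tfrac{(a+1)\mu_a}{a}\,\ell B'(\beta)B(\beta)^{-1}\mathbf u$, and hence the trace formula. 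The correction $B$ produces all the $\theta$-dependent terms.

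Your $\theta^+$ idea can be completed with the same identity. It gives $K_{s+1}=\frac{(a+1)\varepsilon_s}{a\,\varepsilon_{s+1}}B(\beta)^{-1}KB(\beta)$. Comparing with $K_{s+1}=\frac{2(a+1)}{\theta^+}(\lambda^+,1)^T(-1,\lambda^+)$ yields
\[
 \frac{\varepsilon_{s+1}}{\varepsilon_s}=\frac{\theta^+}{\theta}\,\frac{(G(a)e)_2^2}{\det G(a)}=1+\frac{\det\bigl(G(a)e,G'(a)e\bigr)}{\theta\det G(a)},\qquad G(z)=R\bigl((z+1)^2\bigr)A(z),
\]
which expands to the stated formula. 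So the answer is not simply $\theta^+/\theta$ divided by $\det\Pi(a)$, and this route also needs $\Pi^+_{21}(a)\ne0$.

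\textbf{Invariance.} Under $\Pi\mapsto g\Pi$ the quantity $\theta$ is unchanged, since it depends only on $\Pi_{11}/\Pi_{21}$; it is not multiplied by $g(s)$. With your scaling, the Wronskian term would not be invariant. Frame invariance is clearest from the trace form $1+\frac{\operatorname{tr}(KA^{-1}A')}{2a}+\frac{a+1}{a(2a+1)^2}\operatorname{tr}(KA^{-1}KA)$, in which $K$ is conjugated along with $A$.
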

\begin{proof}
We give the local normalization calculation explicitly; compare
\cite[Section~2.5]{ArinkinBorodinTau2007} for ordinary $d$-connections.
Write $\Delta_r=\Delta_k[v,r]$, $\mu_j=2v(j)$,
$\alpha=a^2$ and $\beta=(a+1)^2$.
Let $V_r(y)$ be the transpose of the normalized DRHP solution for
$\mu_r[v]$, followed by the same constant change of frame for every $r$
to obtain the frame in the statement. Its columns give a frame of
$\mathcal F_r[v]\simeq\mathcal O(-1)^{\oplus2}$, with the same
normalization at infinity for all $r$; thus
\[
 A(z)=V_s((z+1)^2)^{-1}D_v(z)V_s(z^2).
\]
Put $\mathbf u=V_s(\alpha)^{-1}e_2$ and $\ell=e_1^TV_s(\alpha)$.
The section $V_s(y)\mathbf u/(y-\alpha)$ has residue $e_2$ at $\alpha$.
Applying the new residue quotient
$\operatorname{Res}_\alpha\chi-\mu_a\phi(\alpha)$, as in
Lemma~\ref{lem:dynamics-point-addition} for the measure $\mu_s[v]$, gives
\[
 \varepsilon_s=\frac{\Delta_{s+1}}{\Delta_s}
 =1-\mu_a e_1^TV_s'(\alpha)\mathbf u>0,
 \qquad K=\frac{\mu_a}{\varepsilon_s}\mathbf u\ell.
\]
Here and below derivatives of $V_r$ are with respect to $y$.
Since $\ell\mathbf u=0$, one has $K^2=0$. The normalized section transfer gives
\[
 V_{s+1}(y)=V_s(y)\left(I+\frac{K}{y-\alpha}\right),\qquad
 R(y)=V_{s+1}(y)^{-1}V_s(y)=I-\frac{K}{y-\alpha}.
\]
The Pearson identity cancels the residue terms on the edge $s\to a$,
so $R((z+1)^2)A(z)$ is regular at $z=s$.
Writing $q(z)=\Pi(z)e_1$ and expanding there yields
\[
 Kq(s)=0,\qquad Kq'(s)=2a\,q(s).
\]
Since $q(s)\ne0$, the two vectors are independent. Hence $\theta\ne0$
and, with $e=(\lambda,1)^T$,
\[
 K=\frac{2a}{\theta}e(-1,\lambda).
\]

Choose the local branch $\sqrt\beta=a+1$, and put
\[
 t(y)=(\sqrt y-1)^2,\qquad
 B(y)=A(\sqrt y-1)^{-1}R(y)^{-1},\qquad
 \delta=(D_v)_{22}(a).
\]
The definitions give the exact identity
\[
 V_{s+1}(y)=D_v(\sqrt y-1)V_s(t(y))B(y).
\]
For both weights, the Pearson identity on the edge $a\to a+1$ gives
\[
 t'(\beta)=\frac{a}{a+1},\qquad
 \frac{\mu_{a+1}}{\delta}=\frac{\mu_a}{t'(\beta)}.
\]
Moreover, $e_1^TD_v=e_1^T$ and
$V_{s+1}(\beta)^{-1}e_2=\delta^{-1}B(\beta)^{-1}\mathbf u$. Consequently,
\[
 \begin{aligned}
 \varepsilon_{s+1}
 &=1-\mu_{a+1}e_1^TV_{s+1}'(\beta)V_{s+1}(\beta)^{-1}e_2\\
 &=\varepsilon_s-
   \frac{\mu_a}{t'(\beta)}\ell B'(\beta)B(\beta)^{-1}\mathbf u.
 \end{aligned}
\]
Dividing by $\varepsilon_s$ gives the required ratio:
\begin{equation}\label{eq:gap-local-tau-ratio}
 \begin{aligned}
 \frac{\mathsf E_{s+2}[v]\mathsf E_s[v]}{\mathsf E_{s+1}[v]^2}
 &=\frac{\varepsilon_{s+1}}{\varepsilon_s}
 =1-\frac{\operatorname{tr}(KB'(\beta)B(\beta)^{-1})}{t'(\beta)}\\
 &=1+\frac{\operatorname{tr}(KA(a)^{-1}A'(a))}{2a}
   +\frac{a+1}{a(2a+1)^2}\operatorname{tr}(KA(a)^{-1}KA(a)).
 \end{aligned}
\end{equation}
The last equality follows from
\[
 B'(\beta)B(\beta)^{-1}
 =-\frac{A(a)^{-1}A'(a)}{2(a+1)}
  -\frac{A(a)^{-1}KA(a)}{(2a+1)^2}.
\]
Here $B',t'$ are $y$-derivatives, and $A'$ is a $z$-derivative.
Since $\operatorname{tr}K=0$, replacing $A$ by $\Pi=cA$ does not
change either trace expression in \eqref{eq:gap-local-tau-ratio}.
Finally, $F=\det(e,\Pi e)$ and
$\det(\Pi e,\Pi'e)=YF'-Y'F$ imply
\[
 \operatorname{tr}(K\Pi^{-1}\Pi')
   =\frac{2a}{\theta}\frac{YF'-Y'F}{\det\Pi},\qquad
 \operatorname{tr}(K\Pi^{-1}K\Pi)
   =-\frac{4a^2}{\theta^2}\frac{F^2}{\det\Pi}.
\]
Substitution gives \eqref{eq:gap-general-observable}.
Constant changes of frame conjugate $A$ and $K$, proving the remaining
invariance assertion.
\end{proof}

\subsection{The recurrence coefficient in spectral coordinates}

On the chart of Lemma~\ref{lem:quadratic-spectral-chart}, fix $d,s,\gamma$
and reconstruct $p,b,h,b_{12}$ from $(f,g)$.
Throughout this subsection, $h,b_{12}$ are evaluated at $u=z(z+1)$.
Set $a=s+1$ and define scalar functions
\begin{equation}\label{eq:gap-spectral-scalars}
 \begin{aligned}
 \lambda&=\frac{p-\psi(s)h(s(s+1))}{f-s(s+1)}-q_\infty a^2,\qquad \theta=2q_\infty a-
 \left.\frac{d}{dz}\frac{\psi(z)h(u)-p}{u-f}\right|_{z=s},\\
 &Y(z)=\psi h+p+(\lambda+q_\infty z^2)(u-f),\\
 &F(z)=(\lambda+q_\infty(z+1)^2)Y(z)
       -(\lambda+q_\infty z^2)(\psi h-p)-b_{12}.
 \end{aligned}
\end{equation}
Define $\mathcal Q_s(f,g;\gamma)$ by the right-hand side of
\eqref{eq:gap-general-observable}, using these scalar functions and
replacing $\det\Pi(a)$ by $\mathscr D(a(a+1))$.
In these derivatives, $d,s,f,g,\gamma$ are held fixed. This expresses the recurrence
coefficient as a rational function of the spectral coordinates.

\begin{thm}[Recurrence for the transformed gap probabilities]
\label{thm:gap-christoffel}\label{thm:gap-painleve}
Where both spectral charts are defined, use the source value $\gamma$
from \eqref{eq:weight-infinity-coefficients}. Then
the recurrence coefficient for $w$ is $Q_s=\mathcal Q_s(f_s,g_s;\gamma)$.
For $(\widehat f_s,\widehat g_s)=\mathcal C_s(f_s,g_s)$, one has
\begin{equation}\label{eq:gap-painleve-recurrence}
 \widehat{\mathsf E}_{s+2}
 =\mathcal Q_s(\widehat f_s,\widehat g_s;\gamma-1)
       \frac{\widehat{\mathsf E}_{s+1}^{\,2}}{\widehat{\mathsf E}_s}.
\end{equation}
The spectral coordinates evolve by \eqref{eq:standard-painleve-root-step}.
Initial probabilities are given by
\eqref{eq:gap-transform} at $s=k,k+1$.
\end{thm}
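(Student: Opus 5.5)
The proof assembles Proposition~\ref{prop:gap-general-pearson}, the chart of Lemma~\ref{lem:quadratic-spectral-chart}, and the conjugacy of Theorem~\ref{prop:intrinsic-pair-motion}. The plan has four steps.

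\textbf{Step 1: the source weight.} On the spectral chart, the weight connection of $w$ is represented by the normal form $P=c_sA$ of \eqref{eq:christoffel-input-slice}, in the frame of $\mathcal G_s[w]\simeq\mathcal O\oplus\mathcal O(-1)$ given by Lemma~\ref{lem:stable-symmetric-splitting}. We pass to a frame of $\mathcal F_s[w]$ using the lower modification at infinity along $\ell_+$, i.e.\ by $T_\infty$. This produces exactly $\Pi(z)=T_\infty((z+1)^2)^{-1}P(z)T_\infty(z^2)$, with $c=c_s$, which has a simple zero at $s$ and $c(a)\neq0$ under the distinctness assumptions.

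\textbf{Step 2: matching the scalars.} We check that the entries of $\Pi$ reproduce \eqref{eq:gap-spectral-scalars}. Conjugation by $T_\infty$ gives
\[
\Pi_{21}=u-f,\qquad \Pi_{11}=\psi h-p-q_\infty(z+1)^2(u-f),
\]
and similarly for $\Pi_{22}$ and $\Pi_{12}$. At $z=s$ this yields the displayed $\lambda$ and $\theta$: the shift by $q_\infty a^2$ and the term $2q_\infty a$ come from differentiating $q_\infty(z+1)^2$. Substituting into the definitions of $Y$ and $F$ in \eqref{eq:gap-local-spectral-data} reproduces the formulas stated in this subsection. Since $\det T_\infty=1$, we get $\det\Pi=\det P=\mathscr D(u)$, so $\det\Pi(a)=\mathscr D(a(a+1))$. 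Proposition~\ref{prop:gap-general-pearson} is invariant under scalar rescalings regular at $s,a$ and under constant frame changes. The normal form differs from the DRHP frame of the proposition by a bundle automorphism of $\mathcal F_s$, which is constant in $y$ because $\mathcal F_s\simeq\mathcal O(-1)^{\oplus2}$. Hence $Q_s=\mathcal Q_s(f_s,g_s;\gamma)$.

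\textbf{Step 3: the transformed weight.} Here Proposition~\ref{prop:gap-general-pearson} applies to $\widehat w$ directly, with no degree bound on its Pearson ratio. It remains to identify its connection on the chart. By Theorem~\ref{thm:alt-canonical-christoffel} and \eqref{eq:christoffel-original-moduli}, the connection of $\widehat w$, multiplied by $z/(z+1)$, is $\mathcal C(A)$. It lies in the target chart with $\widehat\gamma=\gamma-1$ and coordinates $\mathcal C_s(f_s,g_s)$, by Proposition~\ref{prop:christoffel-coordinate-form}. The scalar $z/(z+1)$ is regular and nonzero at $s$ and $a$, so the invariance in Proposition~\ref{prop:gap-general-pearson} shows that the coefficient is computed equally well from $\mathcal C(A)$. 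This requires re-running Steps 1--2 in the normal form for the target data, which gives $\widehat Q_s=\mathcal Q_s(\widehat f_s,\widehat g_s;\gamma-1)$ and hence \eqref{eq:gap-painleve-recurrence}.

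\textbf{Step 4: evolution and initial data.} The evolution statement is \eqref{eq:standard-painleve-root-step}, which is Theorem~\ref{prop:intrinsic-pair-motion} written in coordinates. The initial values $\widehat{\mathsf E}_k$ and $\widehat{\mathsf E}_{k+1}$ come from \eqref{eq:gap-transform}.

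\textbf{Main obstacle.} The hardest part is Step 3. The transformed weight's transport $D_{\widehat w}$ differs from the scalar-normalized $\mathcal C(A)$ by $z/(z+1)$. On $\mathcal F_s[\widehat w]$, the relevant frame differs from the frame induced from the target normal form by the inverse Christoffel modification, so one must show this change is constant on $\mathcal F_s[\widehat w]$. We would argue via the identification $\widehat{\mathcal E}=H^{-1}\mathcal G_s[\widehat w]$ from Lemma~\ref{lem:common-upper-modification}. The eigenline $\ell_+$ for the target data (labelled by $\widehat d_+=d_+$) must match the kernel of $\ell$ in \eqref{eq:christoffel-mixed-sequences}. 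Once this is established, the constant-frame invariance finishes the argument. As a consistency check, we would compare the result with \eqref{eq:gap-known-recurrence}.
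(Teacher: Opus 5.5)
Your proposal is correct and follows the paper's proof. You pass to a frame of $\mathcal F_s[v]\simeq\mathcal O(-1)^{\oplus2}$ via $T_\infty$, match $\lambda,\theta,Y,F$ and $\det\Pi(a)=\mathscr D(a(a+1))$, apply Proposition~\ref{prop:gap-general-pearson} with its invariance under constant frame changes and under the scalar $z/(z+1)$, and take the shift $\gamma\mapsto\gamma-1$, the conjugacy and the initial data from Section~4 and \eqref{eq:gap-transform}. The point you flag as the main obstacle, which the paper asserts in one line for both weights, is immediate: in the holomorphic frame of $\mathcal G_s[\widehat w]$ at infinity the transport $D_{\widehat w}$ is diagonal with $(1,1)$ entry $(z/(z+1))^{2k-2}$, so the eigenline labelled $\widehat d_+=2-2k$ is the first axis, which is exactly $\ker\ell$.
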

\begin{proof}
The inclusion $\mathcal F_s[v]\subset\mathcal G_s[v]$ is the elementary
lower modification along $\ell_+$ at infinity. In these frames,
\[
 T_\infty(y)=\begin{pmatrix}1&q_\infty y\\0&1\end{pmatrix},\qquad
 \Pi=T_\infty((z+1)^2)^{-1}PT_\infty(z^2).
\]
The columns of $T_\infty$ define the splitting
$\mathcal F_s[v]\simeq\mathcal O(-1)^{\oplus2}$.
The scalar factor $z/(z+1)$ in \eqref{eq:christoffel-original-moduli}
does not affect the recurrence coefficient by Proposition~\ref{prop:gap-general-pearson}.
For $e=(\lambda,1)^T$, direct multiplication gives
\[
 \lambda=\frac{\Pi_{11}(s)}{\Pi_{21}(s)},\qquad
 \theta=-\left.(\Pi_{11}/\Pi_{21})'\right|_{s},\qquad
 Y=(\Pi e)_2,\qquad F=\det(e,\Pi e).
\]
These are \eqref{eq:gap-spectral-scalars}, with
$\det \Pi(a)=\mathscr D(a(a+1))$.
Proposition~\ref{prop:gap-general-pearson} gives the recurrence coefficient.
Section~4 gives the parameter shifts and conjugacy of the point-addition maps.
\end{proof}

\subsection{Initial data and continuation}

To compute initial data by finite sums, let $\mathsf p_j$ be the monic
orthogonal polynomials for $\mu_k[v]$, $j=k-1,k$, and set
\[
 \mathcal R_v\mathsf p(y)=\sum_{x\in\mathfrak X_k}
       \frac{v(x)\mathsf p(x^2)}{y-x^2},\qquad
 E_v(y)=\begin{pmatrix}
 \mathsf p_{k-1}&2\mathsf p_k-y\mathsf p_{k-1}\\
 \mathcal R_v\mathsf p_{k-1}&2\mathcal R_v\mathsf p_k-y\mathcal R_v\mathsf p_{k-1}
 \end{pmatrix}.
\]
Orthogonality gives
\[
 \det E_v=-2\sum_{x\in\mathfrak X_k}v(x)\mathsf p_{k-1}(x^2)^2<0.
\]
Thus $E_v$ identifies $\mathcal O\oplus\mathcal O(-1)$ with $\mathcal G_k[v]$.
Put $E_v((z+1)^2)^{-1}D_v(z)E_v(z^2)$ in the normal form of
Lemma~\ref{lem:quadratic-spectral-chart} when its conditions hold,
including the scalar factor $z/(z+1)$ for $\widehat w$.
Only moments through degree $2k-1$ in $y$ are needed.
In particular, when $s=k$ the transformed support has exactly $k$
points. The monic polynomial $\mathsf p_k$ still exists, although its norm
is zero. No formula divides by this norm.

The spectral charts need not contain all connections arising from
positive weights. Outside these charts,
\eqref{eq:gap-general-observable} in a frame of $\mathcal F_s[v]$
still gives the recurrence coefficient under the local Pearson
hypotheses above. In particular, its proof applies directly to
constant Pearson ratios and coincident eigenvalues in the data at
infinity; no spectral chart is needed.

For a reduced parameter family whose finite-cutoff moments vary regularly
(respectively, holomorphically), the expression
\[
 \frac{\Delta_k[v,s+2]\Delta_k[v,s]}{\Delta_k[v,s+1]^2}
\]
is regular (respectively, holomorphic) wherever
$\Delta_k[v,s+1]\ne0$. It agrees with the spectral expression where
the latter is defined. On each irreducible component on which the
spectral chart is dense, it is therefore the unique such extension.
The determinant expression also defines the coefficient on components
that do not meet the chart; for positive weights it is always positive.
This statement concerns the weight family, not arbitrary paths in the
$(f,g)$ plane.

At $s=N-2$, both $a^2=(N-1)^2$ and $(a+1)^2=N^2$ lie outside the
cutoff $s$. Thus $V_s$ is regular and invertible there, and the Pearson
ratio is finite and positive on the edge $N-1\to N$.
It follows that $A_s[v](a)$ is regular and invertible.
The last probability is $\mathsf E_N[v]=1$; no connection at cutoff
$N$ is required.

\label{sec:christoffel-gap-end}

\section*{Conclusion}

The Christoffel deformation and cutoff evolution act by bundle
modifications at disjoint points. Under the stated geometric hypotheses,
their compatibility gives a conjugacy of the point-addition dynamics
on the common stable loci. Moment determinants provide the normalization
needed to transfer exact gap probability recurrences along this conjugacy.
Their cohomological interpretation explains why the same determinant ratio
appears in the bundle construction and in the normalization of the
reweighted particle ensemble.

The probability comparison holds for positive even weights satisfying
the moment and support assumptions; no Pearson condition is required.
On the unit lattice, the local recurrence formula applies to rational
symmetric Pearson ratios of arbitrary degree under its local hypotheses.
Explicit rational spectral coordinates are constructed for source ratios
$d(-z-1)/d(z)$ with $\deg d=1,2$, under the stated local conditions
and conditions at infinity. They describe specified open subsets of the stable loci
and give the transformed recurrence coefficient even when the Christoffel
deformation raises the degree of the Pearson ratio.
Moment determinants, and the local formula whenever its hypotheses
hold, remain available outside these charts.

For the ensemble arising from symplectic skew Howe duality
in~\cite{NNS2026}, the identity $\lambda_1=\max_i a_i-n$ expresses
the first-row distribution in terms of the largest particle.
The construction gives exact finite-size relations for this
observable, complementing the kernel asymptotics studied there.
Iterated Christoffel transformations are a natural next step.
Deriving scaling limits would also require asymptotic control
of the recurrence solutions and their normalization factors.

\subsection*{Acknowledgements} 
The author thanks Anton Nazarov, Anton Dzhamay, Travis Scrimshaw and Viacheslav Krivorol for useful discussions. This work was supported by the Ministry of Science and Higher Education of the Russian Federation through a scientific project under agreement No. 075-15-2025-013.
\bibliographystyle{plain}
\bibliography{lit}

\end{document}